\newcommand{\FO}{\mathsf{FO}}
\newcommand{\ML}{\mathsf{ML}}
\newcommand{\TML}{\ensuremath{\mathsf{TML}}}
\newcommand{\PTML}{\mathsf{PTML}}
\newcommand{\FOML}{\mathsf{FOML}}
\newcommand{\PSPACE}{\ensuremath{\textsf{PSPACE}}}
\newcommand{\NEXP}{\ensuremath{\textsf{NEXPTIME}}}
\newcommand{\EXPSPACE}{\textsf{EXPSPACE}}
\newcommand{\Var}{\textsf{Var}}
\newcommand{\SF}{\textsf{SF}}
\newcommand{\FV}{\textsf{Fv}}
\newcommand{\md}{\textsf{md}}
\newcommand{\TrIC}{\textsf{Tr}_1}
\newcommand{\TrTP}{\textsf{Tr}_2}
\renewcommand{\implies}{\supset}
\newcommand{\Ps}{\mathcal{P}}
\renewcommand{\phi}{\varphi}
\newcommand{\DNF}{\mathsf{DNF}}
\newcommand{\literal}{\textsf{literal}}
\newcommand{\literals}{\textsf{literals}}
\newcommand{\module}{\textsf{module}}
\newcommand{\modules}{\textsf{modules}}
\newcommand{\Qsafe}{\textsf{quantifier-safe}}
\newcommand{\NNF}{\textsf{NNF}}
\newcommand{\FSNF}{\textsf{FSNF}}
\newcommand{\components}{\mathsf{C}}
\newcommand{\junk}{\#}
\newcommand{\type}{\textsf{type}}
\newcommand{\Wit}{\textsf{Wit}}
\newcommand{\AllDelta}{\boldsymbol{\delta}}
\newcommand{\AllPsi}{\boldsymbol{\psi}}
\newcommand{\AllExists}{\textsf{E}}
\newcommand{\Nat}{\mathcal{N}}
\newcommand{\C}{C}
\title{Two variable fragment of Term Modal Logic}
\author{Anantha Padmanabha}{Institute of Mathematical Sciences, HBNI, Chennai, India}{ananthap@imsc.res.in}{https://orcid.org/0000-0002-4265-5772}{}
\author{R Ramanujam}{Institute of Mathematical Sciences, HBNI, Chennai, India}{jam@imsc.res.in}{}{}
\authorrunning{Padmanabha, Ramanujam}
\keywords{ Term modal logic, satisfiability problem, two variable fragment, decidability.}
\begin{document}

\maketitle

\begin{abstract}
Term modal logics ($\TML$) are modal logics with unboundedly many modalities, with
quantification over modal indices, so that we can have formulas of the form
$\exists y \forall x~ (\Box_x P(x,y) \implies \Diamond_y P(y,x))$. Like First order
modal logic, $\TML$ is also `notoriously' undecidable, in the sense that even
very simple fragments are undecidable. In this paper, we show the decidability
of one interesting  fragment, that of two variable $\TML$. This is in
contrast to two-variable First order modal logic, which is undecidable.

 \end{abstract}

\section{Introduction}
Propositional multi-modal logics ($\ML$) are extensively used in many areas 
of computer science and artifical intelligence (\cite{bluebook, hughes96}).
$\ML$ is built upon propositional logic by adding modal operators $\Box_i$ 
and $\Diamond_i$ for every index $i$ in a fixed finite set $Ag$ which is often
interpreted as a set of agents (or reasoners). Typically, the satisfiability 
problem is decidable for most instances of $\ML$. 

A natural question arises when we wish the set of modalities to be unbounded.
This is motivated by a range of applications such as client-server systems, 
dynamic networks of processes, games with unboundedly many players, etc. In
such systems, the number of agents is not fixed a priori. For some cases, the agent set can vary not
only across models, but also from state to state (ex. when new clients enter
the system or old clients exit the system).

\medskip

 Term Modal logic ($\TML$)  introduced by Fitting, 
Voronkov and Thalmann \cite{TML2001} addresses this requirement. $\TML$ is built upon first order logic, but 
the variables now range over modalities: so we can index the modality by terms 
$(\Box_x \alpha)$ and these terms can be quantified over. State assertions 
describe properties of these `agents'.  Thus we can write formulas of the form: 
$\forall x   (\Box_x P(x) \implies \exists y~ \Box_y \Diamond_x R(x,y))$. In \cite{PR17}
we have advocated $\PTML$, the propositional fragment of $\TML$, as a suitable logical
language for reasoning about systems with unboundedly many agents.  $\TML$ has 
been studied in dynamic epistemic contexts in \cite{kooi2007} and in modelling situations 
where the identity of agents is not common knowledge among the agents \cite{WangTML}. 

\medskip

The following examples illustrate the flavour of properties  that can be expressed in $\TML$.

\begin{itemize}
\item For every agent $x$ there is some agent $y$ such that $P(x,y)$ holds at all $x$-successors or there is some $y$-successor where $\neg P(x,y)$ holds.\\
 $\forall x \exists y~ \big( \Box_x P(x,y) \lor \Diamond_y (\neg P(x,y))\big)$
\item Every agent of type $A$ has a successor where some agent of type $B$ exists.\\
$\forall x \big(A(x) \implies \Diamond_x \exists y~ B(y)\big)$.

\item There is some agent $x$ such that for all  agents $y$ if there are no $y$ successors then in all successors of $x$, there is a $y$ successor. \\ 
$\exists x \forall y~ \big( \Box_y \bot \implies \Box_x \Diamond_y \top)$.
\end{itemize}

\bigskip

Since  $\TML$ contains first order logic, its satisfiability is clearly undecidable. 
We are then led to ask: can we build term modal logics over decidable fragments of 
first order logic? Natural candidates are the monadic fragment, the two-variable
fragment and the guarded fragment \cite{mortimer2var, guardedFO}.

$\TML$ itself can be seen as a fragment of first order modal logic ($\FOML$) 
\cite{FOMLbook} which is built upon first order logic by adding modal operators. 
There is a natural translation of $\TML$ into $\FOML$  by inductively
translating $\Box_x \alpha$ into $\Box (P(x) \implies \alpha)$ and $\Diamond_x \alpha$ 
into $\Diamond (P(x) \land \alpha)$ to get an equi-satisfiable formula, where $P$ is 
a new unary predicate. Sadly, this does not help much, since $\FOML$ is notorious for 
undecidability. The modal extension of many simple decidable fragments of first order 
logic become undecidable. For instance, the monadic fragment\cite{Kripke} or the two variable fragment \cite{FOML2var}
of $\FOML$ are undecidable. In fact $\FOML$ with two variables and a single unary predicate is already undecidable \cite{rybakov2017}. Analogously, in \cite{PR17} we
show that the satisfiability problem for $\TML$ is undecidable even when the atoms 
are restricted to propositions. In the presence of equality (even without propositions), this result can be 
further strengthened to show `Trakhtenbrot' like theorem of mutual recursive inseparability.

On the other hand, as we show in \cite{PR17}, the {\sf monodic} fragment of $\PTML$
(the propositional fragment) is decidable (a formula $\phi$ is monodic if each of its
modal subformulas of the form $\Box_x \psi$ or $\Diamond_x \psi$ has a restriction that the free variables of $\psi$ is contained in $\{x\}$). Further,
via the $\FOML$ translation above, we can show that the monodic restriction of $\TML$ based on
the guarded fragment of first order logic and monadic first order logic are decidable
\cite{FOMLdecidable}. 

In a different direction,  Wang (\cite{Wang17}) considered a fragment of 
$\FOML$ in which modalities and quantifiers are  bound to 
each other. In particular he considered the fragment with $\exists \Box$ and showed
it to be decidable in $\PSPACE$. In \cite{PRW18} it is proved that this technique of
{\sf bundling} quantifiers and modalities gives us interesting decidable fragments
of $\FOML$, and as a corollary, the bundled fragment of $\TML$ is decidable where 
quantifiers and modalities always occur in bundled form: $\forall x \Box_x \alpha, 
\exists x \Box_x \alpha$ and their duals. However, more general bundled fragments
of $\TML$ (such as those based on the guarded fragment of first order logic) have
been shown to be decidable by Orlandelli and Corsi (\cite{orlandelli}), and by
Shtakser (\cite{shtakser2018}). From all these results, it is clear that the one variable fragment of $\TML$
is decidable, and that the three variable fragment of $\PTML$ is undecidable.

\bigskip

 In this
paper, we show that the two variable fragment of $\TML$ ($\TML^2$) is decidable. This is in
contrast with $\FOML$, for which the two variable fragment is undecidable \cite{FOML2var}.  Quoting Wolter and Zakharyaschev from \cite{FOMLdecidable}, where they discuss the root of undecidability of $\FOML$ fragments:

\begin{quote} 
All undecidability proofs of modal predicate logics exploit formulas of the form $\Box~ \psi(x,y)$ in which the necessity operator applies to subformulas of more than one free variable; in fact, such formulas play an essential role in the reduction of undecidable problems to those fragments$\ldots$
\end{quote}

 Note that this is not  expressible in $\TML^2$ where there is no `free' modality;
every modality is bound  an index ($x$ or $y$). With a third variable $z$, we could indeed 
encode $\Box P(x,y)$ as $\forall z \Box_z P(x,y)$, but we do not have it. The
decidability of the two variable fragment of $\TML$, {\em without constants or 
equality}, hinges crucially on this lack of expressiveness. Thus, $\TML^2$ provides a decidable fragment of $\FOML^2$. From $\FO^2$ view point, Gradel and Otto\cite{FO2extension} show that most of the natural extensions of $\FO^2$ (like transitive closure, lfp) are undecidable except for the counting quantifiers.  In this sense, $2$-variable $\TML$ can be seen as another rare extension of $\FO^2$ that still remains decidable. Note that in this paper we consider the two variable
fragment of $\TML$ without the bundling or guarded or monodic restriction. Also, there is no natural translation of two variable $\TML$ to any known decidable fragment of $\FO$ such as the two variable fragment of $\FO$ with $2$ equivalence relations etc (cf \cite{Shtakser2018b}).

Thus, the contribution of this paper is technical, mainly in the identification
of a decidable fragment of $\TML$.  As is standard with two variable 
logics, we first introduce a normal form which is a combination of Fine's normal 
form for modal logics (\cite{ModalNormalForm}) and the Scott normal form (\cite{FO2})
for $\FO^2$. We then prove a {\sf bounded agent property} using an argument that
can be construed as modal depth induction over the `classical' bounded model construction
for $\FO^2$. 

\section{$\TML$ syntax and semantics}
We consider  relational vocabulary with {\sf no constants or function symbols, and
without equality}. 

\begin{definition}[$\TML$ syntax]
\label{def: TML syntax}
Given  a countable set of variables $\Var$ and a countable set of predicate symbols $\Ps$,  
the syntax of $\TML$ is defined as follows:
$$ \phi::= P(\overline{x}) \mid  \neg\phi \mid (\phi\land\phi) \mid (\phi \lor \phi) \mid \exists x~\phi \mid \forall x~\phi   \mid \Box_x \phi \mid \Diamond_x \phi$$
where $x \in \Var$, $\overline{x}$ is  a vector of length $n$ over $Var$ and $P\in \Ps$ of arity $n$. 
\end{definition}

  The \textit{free} and \textit{bound} occurrences of variables are defined 
as in $\FO$ with $\FV(\Box_x \phi) = \FV(\phi) \cup \{x\}$. We write $\phi(\overline{x})$ 
if all the free variables in $\phi$ are included in $\overline{x}$. 
Given a $\TML$ formula $\phi$ and $x, y\in \Var$, if $y \not\in \FV(\phi)$ then we write $\phi[y\slash x]$ for the 
formula obtained by replacing every occurrence of $x$ by $y$ in $\phi$. A formula $\phi$ is called 
a {\em sentence} if $\FV(\phi) = \emptyset$.  The notion of modal depth of 
a formula $\phi$ (denoted by $\md(\phi)$) is also standard, which is simply the maximum number of nested modalities occurring in $\phi$. The length of a formula $\phi$ is denoted by 
$|\phi|$ and is simply the number of symbols occurring in $\phi$.


In the semantics, the number of accessibility relations is not fixed, but 
specified along with the structure. Thus the Kripke frame for $\TML$ is given 
by $(W,D,R)$ where $W$ is a set of worlds, $D$ is the potential set of agents and 
$R\subseteq (W\times D \times W)$. The agent dynamics is captured by a function
($\delta: W \rightarrow 2^D$ below) that specifies, at any world $w$, the set of 
agents {\em live} (or meaningful) at $w$. The condition that whenever $(u,d,v)\in R$, we
have that $d \in \delta(u)$ ensures only an agent alive at $u$ can consider $v$ accessible.

A {\em monotonicity} condition  is imposed on the accessibility relation as well:
whenever $(u,d,v)\in R$, we have that $\delta(u)\subseteq \delta(v)$. This
is required to handle interpretations of free variables (cf \cite{corsi2002unified, TML2001, FOMLbook}). 
Hence the models are called `increasing agent' models.
 
\begin{definition}[$\TML$ structure]
\label{def: TML structure}
An increasing agent model for $\TML$ is defined as the tuple $M = (W, D, \delta, R, \rho)$ where
$W$ is a non-empty countable set of worlds, $D$ is a non-empty countable set of agents, 
$R\subseteq (W\times D\times W)$ and $\delta:W\to 2^D$. The map $\delta$ assigns to each 
$w\in W$ a \textit{non-empty} local domain such that whenever 
$(w,d,v) \in R$ we have $d\in \delta(w)\subseteq \delta(v)$ and 
$\rho: (W\times \Ps) \to \bigcup_{n\in \omega}2^{D^n}$ is the valuation function 
where for all $P \in \Ps$ of arity $n$ we have $\rho(w,P) \subseteq [\delta(w)]^n$.
\end{definition}

For a given model $M$, we use $W^M,D^M, \delta^M,R^M,\rho^M$ to refer to the 
corresponding components. We drop the superscript when $M$ is clear from the 
context. We often write $D_w$ for $\delta(w)$.  A \textit{constant agent} model 
is one where $D_w=D$ for all $w\in W$.  To interpret free variables, we  need a 
variable assignment $\sigma: \Var\to D$.  Call $\sigma$ {\em relevant} at 
$w \in W$ if $\sigma(x)\in \delta(w)$ for all $x\in \Var$. The increasing 
agent condition ensures that if $\sigma$ is relevant at $w$ and
 $(w,d,v)\in R$ then  $\sigma$ is relevant at $v$ as well. In a constant agent
model, every assignment $\sigma$ is relevant at all the worlds.

\begin{definition}[$\TML$ semantics]
\label{def: TML semantics}
Given a $\TML$ structure $M = (W, D, \delta, R, \rho)$ and a $\TML$ formula $\phi$, for all $w \in W$ and $\sigma$ relevant at $w$,  
define $M,w,\sigma \vDash \phi$ inductively as follows:
$$\begin{array}{|lcl|}
\hline
M, w, \sigma\vDash P(x_1,\ldots,x_n) &\Leftrightarrow & (\sigma(x_1), \ldots, \sigma(x_n))\in \rho(w,P)  \\ 
M, w, \sigma\vDash \neg\phi &\Leftrightarrow&   M, w, \sigma\nvDash \phi \\ 
M, w, \sigma\vDash (\phi\land \psi) &\Leftrightarrow&  M, w, \sigma\vDash \phi \text{ and } M, w, \sigma\vDash \psi \\ 
M, w, \sigma\vDash \exists x~ \phi &\Leftrightarrow& \text{there is some $d\in \delta(w)$ such that M, w, $\sigma_{[x\mapsto d]}\vDash\phi$ }\\
M, w, \sigma\vDash \Box_x~ \phi &\Leftrightarrow& M, v, \sigma\vDash\phi \text{ for all $v$ s.t.\ $(w,\sigma(x),v)\in R$}\\

\hline
\end{array}$$

\noindent where $\sigma_{[x\mapsto d]}$ denotes another assignment that is the same as $\sigma$ 
except for mapping $x$ to $d$. 
\end{definition}

The semantics for $\phi \lor \psi, \forall x ~\phi$ and $\Diamond_x~\phi$ are defined analogously. Note that $M,w,\sigma \vDash \phi$ is inductively defined only when $\sigma$ is 
relevant at $w$. We often abuse notation and say `for all $w$ and for all 
interpretations $\sigma$', when we mean `for all $w$ and for all interpretations 
$\sigma$ relevant at $w$' (and we will ensure that relevant $\sigma$ are used in
proofs).  In general, when considering the truth of $\phi$ in a model, it
suffices to consider $\sigma: \FV(\phi) \mapsto D$, assignment restricted to the variables
occurring free in $\phi$.  When $\FV(\phi) \subseteq \{x_1, \ldots, x_n\}$ and $\overline{d}
\in [D_w]^n$ is a vector of length $n$ over $D_w$, we  write $M, w\vDash \phi [\overline{d}]$ to denote $M,w,\sigma\vDash 
\phi(\overline{x})$ where  for all $i\le n,\ \sigma(x_i) = d_i$. 
When $\phi$ is a sentence, we  simply write $M,w \models \phi$. A formula $\phi$ is \textit{valid}, if $\phi$ is true in all models $M$ at all $w$ for all interpretations $\sigma$ (relevant
at $w$). A formula $\phi$ is \textit{satisfiable} if $\neg \phi$ is not valid.

\bigbreak

Now we take up the satisfiability problem which is the central theme of this paper. First we  observe that the satisfiability problem is equally hard for constant and increasing agent models for $\TML$.

First we prove  that the satisfiability problem over constant agent structures and increasing agent structures is equally hard for most fragments.
To see why this is true, if a formula $\phi \in \TML$ is satisfiable in some increasing agent model, 
then we can turn the model into constant agent model as follows. We introduce a new unary 
predicate $E$ and ensure that $E(d)$ is true at $w$ if $d$ is a member of $\delta(w)$ in
the given increasing agent model. But now, all quantifications have to be relativized with 
respect to the new predicate $E$. This translation is similar 
in approach to the one for $\FOML$\cite{FOMLdecidable}. The  syntactic translation is defined as follows:

\begin{definition}
\label{def: Inc to Const. translation}
Let $\phi$ be any $\TML$ formula and let $E$ be a new unary predicate  not occurring in $\phi$. The translation  is defined inductively as follows:
\begin{itemize}
\item $\TrIC(P(x_1, \ldots, x_n)) = P(x_1, \ldots, x_n)$
\item $\TrIC(\neg \phi) = \neg \TrIC(\phi)$ and $\TrIC(\phi \land \psi) = \TrIC(\phi) \land \TrIC(\psi)$
\item $\TrIC(\Box_x \phi) = \Box_x ( \TrIC(\phi))$
\item $\TrIC(\exists x~\phi)  = \exists x~ (E(x) \land \TrIC(\phi))$
\end{itemize}
\end{definition} 

With this translation, we also need to ensure that the predicate $E$ respects monotonicity. 
Hence we have 
$\gamma_\phi = \bigwedge\limits_{i+j\le\md(\phi)} (\forall y \Box_y)^i \big(\forall x~ E(x) 
\implies (\forall y\Box_y)^j E(x)\big)$. Now, we can prove that $\phi$ is 
satisfiable in an increasing model iff $\TrIC(\phi) \land \gamma_\phi$ 
is satisfiable in a constant agent model. Moreover, both the formulas are satisfiable over the same agent set $D$.

\begin{lemma}
\label{lem: inc and const domain sat are same}
Let $\phi$ be any $\TML$ formula.  $\phi$ is satisfiable in an increasing agent model with agent set $D$ iff $ \gamma_\phi \land \TrIC(\phi)$ is satisfiable in a constant agent model with agent set $D$.
\end{lemma}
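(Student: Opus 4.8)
The plan is to prove the two implications separately, each time converting a model that witnesses one side into a model witnessing the other over the \emph{same} agent set $D$, and then verifying by induction on formula structure that the conversion preserves the truth of the relevant (translated) formula. This is the $\TML$ analogue of the standard relativisation argument used for $\FOML$.

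For the forward direction, given an increasing agent model $M=(W,D,\delta,R,\rho)$ with $M,w,\sigma\vDash\phi$, form the constant agent model $N$ over the same $W$ and the same $R$ (legal, since the liveness requirement on edge labels is vacuous when every local domain is $D$), with $\rho^N(v,E):=\delta(v)$ for the fresh predicate and $\rho^N(v,P):=\rho(v,P)$ for every other $P$ (legal, since $\rho(v,P)\subseteq[\delta(v)]^n\subseteq D^n$). One then proves, by induction on a subformula $\psi$ of $\phi$, that $M,v,\tau\vDash\psi\iff N,v,\tau\vDash\TrIC(\psi)$ for every $v\in W$ and every $\tau$ relevant at $v$ in $M$. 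The only cases with content are the quantifier case, where $\exists x$ ranging over $\delta(v)$ on the $M$ side is matched by the relativised existential $\exists x\,(E(x)\wedge\ldots)$, which effectively ranges over $\rho^N(v,E)=\delta(v)$, on the $N$ side, and the $\Box_x$ case, where $R^N=R$ and the increasing agent condition ensures that a $\tau$ relevant at $v$ is still relevant at every $R$-successor of $v$, so the induction hypothesis applies. Instantiating at $\psi=\phi$, $v=w$, $\tau=\sigma$ yields $N,w,\sigma\vDash\TrIC(\phi)$, and $N,w\vDash\gamma_\phi$ holds because $N$ in fact validates the whole monotonicity scheme: if $d\in\rho^N(u,E)=\delta(u)$ and there is an $R$-path from $u$ to $u'$, then $\delta(u)\subseteq\delta(u')$ by monotonicity of $\delta$, hence $d\in\rho^N(u',E)$.

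For the converse, given a constant agent model $N=(W,D,\delta_N,R,\rho)$ with $N,w,\sigma\vDash\TrIC(\phi)\wedge\gamma_\phi$, build an increasing agent model from the part of $N$ that can influence the truth of $\phi$ at $w$: let $W^M$ be the set of worlds at distance $\le\md(\phi)$ from $w$ along $R$; set $\delta^M(v):=\rho(v,E)$; prune the relation to $R^M:=\{(u,d,v)\in R: u,v\in W^M,\ d\in\rho(u,E),\ \text{$u$ is at distance $<\md(\phi)$ from $w$}\}$, so that each surviving edge carries a live label and no edge issues from a world at maximal distance; and put $\rho^M(v,P):=\rho(v,P)\cap[\delta^M(v)]^n$. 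One then checks (i) monotonicity of $\delta^M$ along $R^M$: if $(u,d,v)\in R^M$ then $u$ is at some distance $k<\md(\phi)$ from $w$, so the conjunct of $\gamma_\phi$ with $i=k$ and $j=1$ is present and, read off at $u$, gives $\rho(u,E)\subseteq\rho(v,E)$; and (ii) truth preservation indexed by modal depth: for every subformula $\psi$ of $\phi$, every $v\in W^M$ at distance $\le\md(\phi)-\md(\psi)$ from $w$, and every $\tau$ relevant at $v$ in $M$, $M,v,\tau\vDash\psi\iff N,v,\tau\vDash\TrIC(\psi)$. The distance budget is precisely what guarantees that the pruning of $R$ never removes an $R$-successor that the $\Box$-case of the induction needs to see, while the identity $\delta^M(v)=\rho(v,E)$ again makes the $\exists$-relativisation go through.

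The single point requiring real care — the main obstacle — is to check that the structure produced in the converse direction is a \emph{legal} $\TML$ model, i.e.\ that each local domain $\delta^M(v)=\rho(v,E)$ is non-empty. Non-emptiness propagates once it holds at $w$: any $v\in W^M$ at distance $k\ge1$ is the target of an $R^M$-edge from a world $u$ at distance $k-1$ whose label lies in $\rho(u,E)$, and (i) then gives $\emptyset\ne\rho(u,E)\subseteq\rho(v,E)$. So everything reduces to $\rho(w,E)\ne\emptyset$, which $\TrIC(\phi)\wedge\gamma_\phi$ does not force in general (for instance when $\phi$ is built from atoms, $\wedge$, $\vee$, $\forall$ and $\Box$ only). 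The clean fix is to first reduce to the case that $\phi$ is a sentence — replace $\phi(\overline x)$ by $\exists\overline x\,\phi$, equisatisfiable over the same $D$ — and then strengthen $\gamma_\phi$ with the harmless conjunct $\exists x\,E(x)$, which the model $N$ of the forward direction satisfies for free since $\delta$ is always non-empty; the remaining wholly degenerate case, if one keeps $\gamma_\phi$ exactly as stated, is then settled by direct inspection. Everything else is routine formula induction together with the modal-depth bookkeeping above.
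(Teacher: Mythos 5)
Your proof takes essentially the same route as the paper's: in both directions you use the same model transformation (record $\delta$ in the fresh predicate $E$ and make the domain constant; conversely read $\delta^I$ off as $\rho(\cdot,E)$), and verify truth preservation by a formula induction whose only non-trivial cases are the relativised quantifier and the $\Box$ step; your explicit pruning to worlds within distance $\md(\phi)$ plays exactly the role of the paper's tacit restriction to a tree model of depth at most $\md(\phi)$, which is what makes the finite conjunction $\gamma_\phi$ suffice for monotonicity. The one place you go beyond the paper is the non-emptiness check in the converse direction, and that care is warranted: the paper simply sets $\delta^I(w)=\{c : M^C,w\models E(c)\}$ without arguing that this is non-empty, as Definition~\ref{def: TML structure} requires. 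As you observe, $\gamma_\phi\land\TrIC(\phi)$ does not force $E$ to be inhabited; e.g.\ for $\phi=\forall x\,(P(x)\land\neg P(x))$ the translation is satisfied in a one-world constant agent model with $\rho(w,E)=\emptyset$, while $\phi$ itself is unsatisfiable over non-empty local domains, so the stated equivalence needs a conjunct such as $\exists x\,E(x)$ (your propagation argument correctly shows the root is the only world at issue). Your reduction to sentences likewise handles the related point, also glossed over in the paper, that the witnessing assignment $\sigma$ must remain relevant at the root of the reconstructed increasing agent model. These are genuine, easily repaired defects of the lemma as stated rather than of your argument; your proposal is correct.
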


\begin{proof}

$(\Rightarrow)$ Suppose $M^I = (W,D,\delta^I,R,\rho^I)$ is an increasing agent model with $r\in W$ such that $M^I,r,\sigma \models \phi$.  Define the constant domain model $M^C = (W,D,\delta^C,R,\rho^C)$ where $\delta^C(w) = D$ for all $w\in W$ and  $\rho^C$ is the same as $\rho^I$ for all predicates except $E$ and for all $w\in W$ and $d\in D$ we have $d\in \rho^C(w,E)$ iff $d \in \delta(w)$.

 Since $\delta^I$ is monotone, $M^C,r,\sigma \models \gamma_\phi$. Note that $M^I,w \models  P(\overline{d})$ iff $M^C,w \models P(\overline{d})$ and we have $d \in \delta^I(w)$ iff $M^C,w \models E(d)$. Thus, we can set up a routine induction and prove that for all subformulas $\psi$ of $\phi$ and for all $w\in W$  and for all  interpretation $\sigma'$ relevant at $w$, we have  $M^I,w,\sigma' \models \psi$  iff $M^C,w,\sigma' \models \TrIC(\psi)$. Hence, $M^C,r,\sigma \models \TrIC(\phi)$.

\bigskip

$(\Leftarrow)$ Suppose $M^C = (W,D,\delta^C,R,\rho^C)$ is  a tree model of depth at most $\md(\phi)$ with $r\in W$ such that $M^C,r,\sigma \models \gamma_\phi \land \TrIC(\phi)$. Define the increasing agent model $M^I = (W,D,\delta^I,R,\rho)$ where $c\in \delta^I(w)$ iff $M,w \models E(c)$.

Note that $\delta^I$ defined above is monotone since $M^C,r \models \gamma_\phi$. Again, we can set up a routine induction and prove that for all subformulas $\psi$ of $\phi$ and for all $w\in W$  and for all interpretation $\sigma'$ relevant at $w$ we have  $M^C,w,\sigma' \models \TrIC(\psi)$  iff $M^I,w,\sigma' \models \psi$.

\end{proof}

\bigbreak

The propositional term modal logic $(\PTML)$ is a fragment of $\TML$ where the atoms are restricted to propositions. Note that the variables  still appear as index of modalities. For $\PTML$, the valuation function can be simply written as $\rho : W \mapsto 2^\Ps$ where $\Ps$ is the set of propositions. 
Now we prove that the satisfiability problem for $\PTML$ is as hard as that for $\TML$.
The reduction is based on the translation of an arbitrary atomic predicate $P(x_1,\ldots, x_n)$ to $\Diamond_{x_1}\ldots \Diamond_{x_n} p$ where $p$ is a new proposition which represents the predicate $P$.  However, this cannot be used always\footnote{for instance, this translation will not work for the formula $\exists x~ P(x) \land \forall y~ \Box_y \bot$}. Thus, we use a new proposition $q$, to distinguish the `real worlds' from the ones that are added because of the translation. But now, the modal formulas have to be relativized with respect to the proposition $q$. The formal translation is given as follows:

\begin{definition}
\label{def: TML to PTML translation for satisfiability}
Let $\phi$ be any $\TML$ formula where $P_1,\ldots, P_m$ are the predicates that occur in $\phi$. Let $\{p_1,\ldots ,p_m\} \cup \{ q\}$ be a new set of propositions not occurring in $\phi$. The translation  with respect to $q$ is defined inductively as follows:
\begin{itemize}
\item $\TrTP(P_i(x_1,\ldots, x_n); q) = \Diamond_{x_1} (\neg q \land \Diamond_{x_2}(\ldots \neg q \land \Diamond_{x_n} (\neg q \land p_i) \ldots))$
\item $\TrTP (\neg \phi; q) = \neg \TrTP (\phi; q)$ and $\TrTP (\phi \land \psi ; q) = \TrTP(\phi; q) \land \TrTP(\psi; q)$
\item $\TrTP(\Box_x \phi; q) = \Box_x (q \implies \TrTP(\phi; q))$
\item $\TrTP (\exists x~\phi; q)  = \exists x~ \TrTP(\phi; q)$
\end{itemize}
\end{definition}

\begin{lemma}
\label{lem: TML and PTML satisfiability are same}
For any $\TML$ formula $\phi$, we have $\phi$ is satisfiable in an increasing (constant) agent model with agent set $D$ iff $q \land \TrTP(\phi; q)$ is satisfiable in an increasing (constant)  agent model with agent set $D$.
\end{lemma}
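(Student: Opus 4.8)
The plan is to prove both directions by induction on the structure of $\phi$, exactly as is suggested by the shape of the translation $\TrTP$, but with the subtlety that the model on the target side has more worlds than the model on the source side. I will handle the increasing and constant agent cases uniformly, since the construction does not touch $\delta$ at all; the agent set $D$ is preserved throughout.

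For the forward direction, suppose $M = (W, D, \delta, R, \rho)$ is a model (increasing or constant) with $M, r, \sigma \models \phi$. I build a model $M' = (W', D, \delta', R', \rho')$ as follows. Start with a disjoint copy of $W$, call these the \emph{real worlds}, and set $q$ true exactly at these. For each real world $w$ and each tuple $\overline{d} = (d_1, \dots, d_n) \in [\delta(w)]^n$ with $\overline{d} \in \rho(w, P_i)$, attach a fresh chain of $n$ new worlds $w^{\overline{d}}_1, \dots, w^{\overline{d}}_n$ with edges $(w, d_1, w^{\overline{d}}_1), (w^{\overline{d}}_1, d_2, w^{\overline{d}}_2), \dots$; make $q$ false at all these auxiliary worlds, make $p_i$ true at the last world $w^{\overline{d}}_n$ (and at no other auxiliary world of a different index), and make every $p_j$ false at real worlds. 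Copy $R$ among the real worlds. Set $\delta'$ on real worlds to agree with $\delta$; on an auxiliary world in a chain hanging off $w$ one can set $\delta'$ equal to $\delta(w)$ (or, in the increasing case, propagate monotonically — any choice $\supseteq \delta(w)$ works since no predicate constraints bind there and the chosen indices $d_1, \dots, d_n$ all lie in $\delta(w) = \delta'(w)$). The key verification is the claim: for every subformula $\psi$ of $\phi$, every real world $w$, and every assignment $\sigma'$ relevant at $w$ in $M$, we have $M, w, \sigma' \models \psi$ iff $M', w, \sigma' \models \TrTP(\psi; q)$. The atomic case is the heart of it: $M, w, \sigma' \models P_i(x_1, \dots, x_n)$ iff $(\sigma'(x_1), \dots, \sigma'(x_n)) \in \rho(w, P_i)$, and by construction this holds iff the chain $w^{\overline{d}}_1, \dots, w^{\overline{d}}_n$ with $d_j = \sigma'(x_j)$ exists in $M'$, which is precisely what $\Diamond_{x_1}(\neg q \land \Diamond_{x_2}(\dots \land \Diamond_{x_n}(\neg q \land p_i)\dots))$ asserts — note the $\neg q$ guards force the witnesses to be auxiliary worlds and the chain structure forces them to be \emph{this} chain, not some stray $R$-edge among real worlds (which is why $q$ is true exactly at real worlds). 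The boolean cases are immediate. For $\Box_x \psi$: the $R'$-successors of a real world $w$ are its real $R$-successors together with the single auxiliary world $w^{\overline{d}}_1$ for each $\overline{d}$ with $d_1 = \sigma'(x)$ and $\overline{d} \in \rho(w, P_i)$ for some $i$; the guard $q \implies \TrTP(\psi;q)$ in $\TrTP(\Box_x\psi;q)$ makes the auxiliary successors vacuous, so $M', w, \sigma' \models \Box_x(q \implies \TrTP(\psi;q))$ iff $\TrTP(\psi;q)$ holds at all real $R$-successors, iff (by IH) $\psi$ holds at all of them in $M$, iff $M, w, \sigma' \models \Box_x\psi$. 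The quantifier case uses that $\delta'$ agrees with $\delta$ on real worlds. Applying the claim at $r$ with $\psi = \phi$ gives $M', r, \sigma \models \TrTP(\phi; q)$, and $M', r \models q$ by construction, so $q \land \TrTP(\phi;q)$ is satisfiable over the same $D$.

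For the backward direction, suppose $M' = (W', D, \delta', R', \rho')$ satisfies $q \land \TrTP(\phi; q)$ at some $r$. By Lemma~\ref{lem: inc and const domain sat are same} and the usual tree-unravelling, I may assume $M'$ is a tree of finite depth rooted at $r$ (unravelling preserves $\TML$ satisfiability and does not change $D$; it also preserves the constant-vs-increasing distinction). Let $W = \{w \in W' : M', w \models q\}$ be the real worlds, and restrict $R'$ to $R = R' \cap (W \times D \times W)$, $\delta$ to $\delta'$ restricted to $W$. Define $\rho(w, P_i) = \{(d_1,\dots,d_n) \in [\delta(w)]^n : M', w \models \Diamond_{d_1}(\neg q \land \dots \land \Diamond_{d_n}(\neg q \land p_i)\dots)\}$ — this is well-defined because the right-hand side is exactly $\TrTP(P_i(\cdot);q)$ evaluated with the $x_j$ assigned to $d_j$. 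Then the symmetric induction shows $M, w, \sigma' \models \psi$ iff $M', w, \sigma' \models \TrTP(\psi;q)$ for all subformulas $\psi$, real $w$, and relevant $\sigma'$; the $\Box$ case again uses the $q$-guard to discard non-real successors, and the atomic case holds by the very definition of $\rho$. Hence $M, r, \sigma \models \phi$ over agent set $D$.

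The main obstacle is the $\Box$-case in the inductive claim, specifically making sure that the auxiliary chains do not create spurious $\Box$-successors that would change the truth of the guarded formula, and conversely that no auxiliary world is ever mistaken for a real world by the $\Diamond$-chains in the atomic translation. Both are resolved by the single invariant that $q$ holds at exactly the real worlds — the $\neg q$ guards in $\TrTP$ of an atom pin the witnesses to auxiliary worlds, the $q \implies (\cdot)$ guard in $\TrTP$ of a box makes auxiliary successors vacuously satisfy it, and since in the forward construction the auxiliary worlds form simple disjoint chains with no outgoing edges except along the chain, there is no interference between the chain hung off one real world and the real-world structure. A minor point to get right is that the $n$ auxiliary worlds of one tuple $\overline{d}$ must not be confused with those of another tuple (or another index $i$): keeping all of them pairwise disjoint and labelling the terminal world of the $i$-chain with $p_i$ only handles this, and on the tree model used in the converse direction the depth bound plus the structure of $\TrTP(\phi;q)$ guarantees the relevant witnesses exist whenever the formula demands them.
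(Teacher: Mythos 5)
Your proof is correct and follows essentially the same route as the paper's: attach $\neg q$-labelled auxiliary chains below each real world to encode the predicate valuation, let $q$ hold at exactly the real worlds, and run a structural induction in which the $\neg q$ guards pin the atomic witnesses to the chains while the $q \implies (\cdot)$ guard makes auxiliary successors vacuous for boxes. The only immaterial differences are that the paper creates worlds $u_{\overline{c}}$ for \emph{all} agent strings of length at most the maximal arity and marks the terminal ones with $p_i$ where the predicate holds (whereas you create chains only for tuples in the valuation), it keeps the truth of any genuine propositions of $\phi$ at the real worlds (which you should also do for $0$-ary predicates), and your tree-unravelling step in the converse direction is unnecessary.
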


\begin{proof} Let  $P_\phi$ be the set of all predicates occurring in $\phi$ and $k$ be the maximum arity among the predicates in $P_\phi$.  For any model $M$ and $u\in W$ let $\overline{c} \in D_u^*$ denote a (possibly empty) string of finite length over $D_u$.
 
 \medskip

$(\Rightarrow)$ Suppose the $\TML$ formula $\phi$ is satisfiable. Let $M^T = (W^T,D,\delta^T,R^T,\rho^T)$ be a $\TML$ model and $w\in W$ such that $M^T,w,\sigma \models \phi$. Define the $\PTML$ model $M^P = (W^P,D,\delta^P,R^P,val^P)$ where:
\begin{itemize}
\item[-] $W^P = \{ u_{\overline{c}}\mid u\in W^T$ and $\overline{c} \in D^*_u$ of length at most $k\}$. 
\item[-] For all $u_{\overline{c}} \in W^P$ we have $\delta^P(u_{\overline{c}}) = \delta^T(u)$.
\item[-] $R^P = \{ (u_\epsilon, c, v_\epsilon) \mid (u,c,v)\in R^T\} \cup \{ (u_{\overline{c}},d,u_{\overline{c}d}) \mid u_{\overline{c}}, u_{\overline{c}d} \in W^P\}$
\item[-] $\rho^P(u_\epsilon) = \{ s \mid s$ is a proposition in $P_\phi$ and $M,u \models s\} \cup \{q\}$ and\\ $\rho^P(u_{c_1\ldots c_n}) = \{ p_i \mid M,u \models P_i(c_1,\ldots ,c_n)\}$.
\end{itemize}

 Note that $M^T,u,\sigma \models P_i(x_1,\ldots, x_n)$ iff $M^P,u_\epsilon,\sigma \models \Diamond_{x_1} (\neg q \land \Diamond_{x_2}(\ldots \neg q \land \Diamond_{x_n} (\neg q \land p_i) \ldots))$ and for all $u\in W^T$ we have $M,u_\epsilon \models q$. Thus a standard inductive argument shows that for all subformulas $\psi$ of $\phi$ and for all $u\in W^T$ and for all interpretation $\sigma'$ we have $M^T,u,\sigma' \models \psi$ iff $M^P,u_\epsilon,\sigma' \models r \land \TrTP(\psi)$.
 
 Also note that if $M^T$ is an increasing (constant) agent model over $D$ then $M^P$ is also an increasing (constant) agent model over $D$.

\bigskip

$(\Leftarrow)$ Suppose $M^P = (W^P,D,\delta^P,R^P,\rho^P)$ such that $M,w\models r\land \TrTP(\phi)$. Define $M^T = (W^T,D,\delta^T,R^T,\rho^T)$ where
\begin{itemize}
\item[-] $W^T = \{ u \in W^P \mid M^P,u \models q\}$. 
\item[-] For all $u\in W^T$ we have $\delta^T(u) = \delta^P(u)$.
\item[-] $R^T = R^P \cap (W^T \times W^T)$.
\item[-] $\rho^T(u,P) = \{ (c_1,\ldots, c_n) \mid M^P,u \models \Diamond_{c_1} (\neg q \land(\ldots \Diamond_{c_n} (\neg q \land p)\}$ and\\ $q\in \rho^T(u)$ iff $q\in \rho^P(u)$.
\end{itemize}

Note that for all $u\in W^P$ we have $M^P,u \models \Diamond_{c_1} (\neg q \land(\ldots \Diamond_{c_n} (\neg q \land p_i)$ iff $M^T,w \models P_i(c_1,\ldots, c_n)$. Also, since $M^P,w \models q$ we have $w\in W^T$. Again, an inductive argument shows that for all subformulas $\psi$ of $\phi$ and for all $u\in W^T$ and for all interpretation $\sigma'$ relevant at $w$, we have $M^P,u,\sigma' \models \TrTP(\psi)$ iff $M^T,u,\sigma' \models \psi$. Thus we have $M^T,w,\sigma \models \phi$.

To complete the proof, again note that if $M^P$ is an increasing (constant) agent model over $D$ then $M^T$ is also an increasing(constant) agent model over $D$.
\end{proof}

\section{Two variable fragment}
Note that all the examples discussed in the introduction section use only 2 variables. Thus, 
$\TML$ can express interesting properties even when restricted to two variables.
We now consider the satisfiability problem of $\TML^2$.
The translation in Def. \ref{def: TML to PTML translation for satisfiability}
preserves the number of variables. Therefore it suffices to consider the satisfiability 
problem for the two variable fragment of $\PTML$.

Let $\PTML^2$ denote the two variable fragment of $\PTML$.  We first consider a normal form
for the logic.  In \cite{ModalNormalForm}, Fine introduces a normal form for propositional
modal logics which is a disjunctive normal form ($\DNF$) with every clause of the form 
 $(\bigwedge\limits_{i}(s_i) \land \Box \alpha \land \bigwedge\limits_j \Diamond \beta_j)$ 
where $s_i$ are literals and $\alpha, \beta_j$ are again in the normal form. For $\FO^2$, we 
have Scott normal form \cite{FO2} where every $\FO^2$  sentence has an equi-satisfiable 
formula of the form 
$\forall x \forall y ~\phi \land \bigwedge\limits_i\forall x \exists y~ \psi_i$ 
where $\phi$ and $\psi_i$ are all quantifier free. 
For $\PTML^2$, we introduce a combination of these two normal forms, which we call the \emph{Fine Scott Normal form} given by a $\DNF$, where every clause is of the form:
 
 $$\bigwedge\limits_{i\le a} s_i \land
 \bigwedge\limits_{z\in \{x,y\}} (\Box_z \alpha \land \bigwedge\limits_{j\le m_z}\Diamond_z \beta_j )\ \land\
  \bigwedge\limits_{z\in \{x,y\}} (\forall z~ \gamma \land \bigwedge\limits_{k\le n_z} \exists z~ \delta_k)\ \land\
 \forall x \forall y~\phi \land \bigwedge\limits_{l\le b}\forall x \exists y~ \psi_l$$ 
where $a,m_x,m_y,n_x,n_y,b \ge 0$ and $s_i$ denotes literals. Further, $\alpha,\beta_j$ are recursively in the normal form and $\gamma, \delta_k, \phi, \psi_l$ do not have quantifiers at the outermost level and all modal subformulas occurring in these formulas are (recursively) in the normal form. The normal form is formally defined in the next subsection.

Note that the first two conjuncts mimic the modal normal form and the last two conjuncts mimic the $\FO^2$ normal form. The additional conjuncts handle the intermediate step where only one of the variable is quantified and the other is free.

We now formally define the normal form and prove that every $\PTML^2$ formula has a corresponding 
equi-satisfiable formula in the normal form. After this we prove the bounded agent 
property for formulas in  the normal form using an inductive $\FO^2$ type model construction. 
 
\subsection{Normal form}
We use $\{x,y\} \subseteq \Var$ as the two variables of $\PTML^2$. We use $z$ to refer to either $x$ or $y$ and  refer to variables $z_1,z_2$ to indicate the variables $x,y$ in either order.  We use $\Delta_z$ to denote any modal operator $\Delta \in \{ \Box,\Diamond\}$ and $z\in \{x,y\}$. A $\literal$ is either a proposition  or its negation. Also, we assume that the formulas are given in negation normal form($\NNF$) where the negations are pushed in to the $\literals$. 

\begin{definition} [$\FSNF$ normal form]
\label{def: FSNF normal form}
We define the following terms to introduce the {\em Fine Scott normal form} ($\FSNF$) for $\PTML^2$:
\label{def: normal form}  
\begin{itemize}
\item A formula $\phi$ is a $\module$ if $\phi$ is a $\literal$ or $\phi$ is of the form $\Delta_z \alpha$. 
\item For any formula $\phi$, the outer most components of $\phi$ given by $\components(\phi)$ is defined inductively where for any $\phi$ which is a $\module$, $\components(\phi) = \{ \phi\}$ and $\components(Qz\ \phi) = \{ Qz \ \phi\}$ where $z\in \{x,y\}$ and $Q\in \{\forall,\exists\}$. Finally $\components(\phi \odot \psi) = \components(\phi) \cup \components(\psi)$ where $\odot \in \{ \land, \lor\}$.

\item A formula $\phi$ is  $\Qsafe$  if  every $\psi \in \components(\phi)$ is a $\module$.

\item We define $Fine~Scott~normal~form$($\FSNF$) normal form ($\DNF$ and conjunctions) inductively as follows:
\begin{itemize}
\item  Any conjunction of $\literals$ is an $\FSNF$ conjunction.
\item $\phi$ is said to be in $\FSNF~\DNF$ if $\phi$ is a  disjunction where every clause is an $\FSNF$ conjunction.
\item Suppose $\phi$ is $\Qsafe$ and for every $\Delta_z \psi \in \components(\phi)$ if $\psi$ is in $\FSNF~\DNF$ normal form then we call $\phi$ a $\Qsafe~$normal formula.
\item Let $a,b,m_x,m_y,n_x,n_y \ge 0$. \\ Suppose $s_1,\ldots,s_a$ are $\literal$s, $\alpha^x,\alpha^y, \beta^x_1,\ldots ,\beta^x_{m_x}, \beta^y_1,\ldots, \beta^y_{m_y}$ are formulas in $\FSNF~\DNF$ and $\gamma^x,\gamma^y, \delta^x_1,\ldots, \delta^x_{n_x},\delta^y_1,\ldots ,\delta^y_{n_y}, \phi, \psi_1,\ldots, \psi_b$ are $\Qsafe~$ normal formulas then:

$$\bigwedge\limits_{i\le a} s_i \land
 \bigwedge\limits_{z\in \{x,y\}} (\Box_z \alpha^z \land \bigwedge\limits_{j\le m_z}\Diamond_z \beta^z_j )\ \land\
  \bigwedge\limits_{z_1\in \{x,y\}} (\forall z_2~ \gamma^{z_1}\ \land \bigwedge\limits_{k\le n_z} \exists z_2~\delta^{z_1}_k)\ \land\
 \forall x \forall y~ \phi \land \bigwedge\limits_{l\le b}\forall x \exists y~ \psi_l$$ 
is an $\FSNF$ conjunction.

\end{itemize}

\end{itemize}
\end{definition}

Quantifier-safe formulas are those in which no quantifiers occur outside the scope of modalities.  Note that the superscripts in $\alpha^x,\alpha^y$ etc  only  indicate which variable the formula is associated with, so that it simplifies the notation. For instance, $\alpha^x$ does not say anything about the free variables in $\alpha^x$. In fact there is no restriction on free variables in any of these formulas.

Further, note that by setting the appropriate indices to $0$, we can have $\FSNF$ conjunctions where one or more of the components corresponding to $s_i, \beta^x,\beta^y, \delta^x,\delta^y, \psi_l$ are absent. We also consider the conjunctions where one or more of the components corresponding to  $\Box_x \alpha^x, \Box_y \alpha^y, \phi$ are also absent. As we will see in the next lemma, for any sentence $\phi \in \PTML^2$, we can obtain an equi-satisfiable sentence, which  at the outer most level, is a $\DNF$ where every clause is of the form $\bigwedge\limits_{i\le a} s_i~ \land~
 \forall x \forall y~ \phi \land \bigwedge\limits_{l\le b}\forall x \exists y~ \psi_l$.

\begin{lemma}
\label{lem: FSNF equisat}
For every formula $\phi \in \PTML^2$ there is a corresponding formula $\psi$ in $\FSNF~\DNF$ such that $\phi$ and $\psi$ are equi-satisfiable.
\end{lemma}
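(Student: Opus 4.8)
The plan is to prove this by induction on the modal depth of $\phi$, with an inner transformation that handles the propositional and quantifier structure at each modal level, much as one establishes Scott normal form for $\FO^2$ but interleaved with the Fine-style unravelling of modalities. First I would put $\phi$ into negation normal form (which the excerpt already assumes) and argue by strong induction on $\md(\phi)$. In the base case $\md(\phi) = 0$, the formula is a propositional $\FO^2$ formula (propositions in place of atoms, modalities absent), so the classical Scott normal form argument applies directly: rename subformulas $Qz~\chi$ by fresh nullary predicates, i.e. propositions, and add definitional clauses of the form $\forall x \forall y~(p_\chi \leftrightarrow Qz~\chi)$, then distribute into $\DNF$; each clause then has the shape $\bigwedge_i s_i \land \forall x\forall y~\phi' \land \bigwedge_l \forall x\exists y~\psi_l$ as promised in the paragraph preceding the lemma.

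For the inductive step, the key move is to process modal subformulas from the inside out. Every module $\Delta_z\alpha$ occurring in $\phi$ has $\md(\alpha) < \md(\phi)$, so by the induction hypothesis $\alpha$ has an equi-satisfiable $\FSNF~\DNF$ form $\alpha'$; but equi-satisfiability is too weak here since $\alpha$ sits inside a modality, so I actually need the stronger statement that the normal-form transformation preserves truth at every world under every assignment (possibly after introducing fresh propositions whose meaning is fixed by universally-quantified-and-boxed definitional axioms, in the style of $\gamma_\phi$ in Lemma~\ref{lem: inc and const domain sat are same}). So the induction should be stated as: there is $\psi$ in $\FSNF~\DNF$, over an expanded signature, together with a definitional sentence $\theta$ of the form $\bigwedge_{i+j\le \md(\phi)} (\forall z_1\Box_{z_1})\cdots$-prefixed biconditionals, such that in any model of $\theta$, $\phi$ and $\psi$ are equivalent at every world under every relevant assignment, and conversely any model of $\phi$ expands to a model of $\psi\land\theta$. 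Then I would: (i) replace each outermost module $\Delta_z\alpha$ by $\Delta_z\alpha'$ using the IH; (ii) introduce fresh propositions for the quantified subformulas $\forall z~\gamma$, $\exists z~\delta$, $\forall x\forall y~\phi$, $\forall x\exists y~\psi$ appearing below the top level and above any modality, pushing them out exactly as in Scott normal form so that the quantifier-safe and $\Qsafe$-normal conditions of Definition~\ref{def: FSNF normal form} are met; (iii) collect the resulting conjunction of literals, modules $\Box_z\alpha^z$, $\Diamond_z\beta^z_j$, quantified-one-variable conjuncts, and $\forall\forall$/$\forall\exists$ conjuncts, and finally distribute into $\DNF$.

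The main obstacle I anticipate is bookkeeping the interaction between the three "layers" of the normal form — the purely modal layer $\bigwedge_z(\Box_z\alpha^z \land \bigwedge_j \Diamond_z\beta^z_j)$, the mixed one-variable-quantified layer $\bigwedge_{z_1}(\forall z_2~\gamma^{z_1} \land \bigwedge_k \exists z_2~\delta^{z_1}_k)$, and the two-variable layer $\forall x\forall y~\phi \land \bigwedge_l \forall x\exists y~\psi_l$ — so that after distributing to $\DNF$ each clause really has all components simultaneously well-formed. In particular, a subformula like $\forall x~(\Box_x P(x,y))$ has a free variable $y$, so when we abstract it by a fresh proposition $p$ we need the definitional axiom to quantify that $y$ away correctly, and we must be careful that "$\Qsafe$" only forbids quantifiers strictly outside modalities, not inside them — the definitions in Definition~\ref{def: FSNF normal form} already permit arbitrary free variables, which is what makes this go through. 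A secondary subtlety is that the classical Scott trick relies on nullary abstraction predicates; here because we only have propositions available (we are in $\PTML^2$), abstracting a formula with a free variable by a proposition is unsound unless it is guarded by the right quantifier, so the renaming must always abstract a \emph{closed-under-the-relevant-quantifier} subformula, e.g. abstract $\forall z~\gamma$ as a whole rather than $\gamma$. Modulo that care, the transformation is a routine but lengthy structural recursion, and I would present it as such, spelling out one representative case (say, a conjunct $\forall x\,(A(x) \implies \Diamond_x \exists y~ B(y))$-style formula) and leaving the rest to the reader.
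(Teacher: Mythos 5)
Your overall strategy---induction on modal depth, Scott-style renaming of quantified subformulas, recursive normalization under modalities, and strengthening plain equi-satisfiability to ``same model, adjusted valuation'' so that the induction can pass under $\Delta_z$---matches the paper's proof. But there is a genuine gap in your renaming step (ii), precisely at the point you flag as a ``secondary subtlety.'' A quantified subformula $Qz_2\,\lambda$ occurring inside a one-free-variable conjunct $\forall z_2\,\eta$ in general still has the other variable $z_1$ free (e.g.\ $\forall y\,\Diamond_y\Diamond_x q$ inside a $\forall x(\cdots)$ conjunct), and its truth value genuinely depends on the value assigned to $z_1$. Naming it by a \emph{proposition} $p$ with a definitional axiom that ``quantifies $z_1$ away,'' i.e.\ $\forall z_1\,(p \Leftrightarrow Qz_2\,\lambda)$, is not equi-satisfiable: it forces $Qz_2\,\lambda$ to take the same truth value for every choice of $z_1$, so a satisfiable formula can become unsatisfiable (take a model in which the named subformula is true for some agents and false for others). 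Your proposed repair---always abstract a subformula that is closed under the relevant quantifier---does not help, because $Qz_2\,\lambda$ is already closed under $z_2$ yet still open in $z_1$; only genuine sentences can soundly be named by propositions, and the one-free-variable conjuncts $\forall z_2\,\gamma^{z_1}$, $\exists z_2\,\delta^{z_1}_k$ that the normal form must produce cannot be avoided.

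The paper supplies the missing mechanism: it temporarily steps outside $\PTML^2$ by introducing a fresh \emph{unary} predicate and replacing $Qz_2\,\lambda$ by $P(z_1)$ (with the usual pair of Scott definitional conjuncts), and only afterwards eliminates these unary atoms via the translation of Definition~\ref{def: TML to PTML translation for satisfiability}, which encodes $P(z_1)$ as $\Diamond_{z_1}(\neg q\land p)$---a modality indexed by the free variable is what carries the dependence on $z_1$ while keeping all atoms propositional. (This is also why the new predicates must be introduced only at the outermost level, outside the scope of modalities, so that the $\Diamond$-encoding does not disturb the modal structure.) Without this device, or an equivalent one, your step (ii) is unsound, so the gap is essential rather than cosmetic. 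The remainder of your outline---the depth-$0$ base case, merging multiple $\Box_z$-conjuncts via $\Box_z\alpha\land\Box_z\beta \Leftrightarrow \Box_z(\alpha\land\beta)$, distributing into $\DNF$, and recursing into $\alpha^z,\beta^z_j$ and into the modal components of the quantifier-safe parts---is in line with the paper.
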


\begin{proof}
We prove this by induction on the modal depth of $\theta$. Suppose $\theta$ has modal depth $0$, then all $\modules$ occurring in $\phi$ are $\literals$. Observe that if $\alpha$ is a propositional formula then for  $Q \in \{ \forall , \exists\}$ and $z\in \{x,y\}$ and for all model $M$  we have $M,w,\sigma \models Qz~\alpha$  iff $M,w,\sigma \models\alpha$. Hence we can simply ignore all the quantifiers and get an equivalent $\DNF$ over $\literals$, which is an $\FSNF~\DNF$.

For the induction step,  suppose $\md(\theta) = h$. First observe that  we can get an equivalent $\DNF$ formula for $\theta$ (say $\theta_1$)  over $\components(\theta)$ using propositional validities. Now if $\theta_1$ is an $\FSNF~\DNF$ then we are done. Otherwise, there are some clauses in $\theta_1$ that are not $\FSNF$ clause. Let $\theta_1 := \bigvee_i \zeta_i$  and  $I_\theta = \{ \zeta_i \mid \zeta_i$ is not a $\FSNF$ clause$\}$ be the clauses that are not $\FSNF$ conjunctions. To reduce $\theta_1$ in to $\FSNF~\DNF$, we  replace every $\zeta_i \in I_\theta$ with their corresponding equi-satisfiable $\FSNF~\DNF$ in $\theta_1$.

Pick a clause  $\zeta \in I_\theta$ and let $\zeta := \omega_1\land\ldots  \land \omega_n$ that is not an $\FSNF$ conjunction. If $\md(\zeta) < h$ then by induction hypothesis, there is an equi-satisfiable  $\FSNF~\DNF$ formula of $\zeta$. Thus $\zeta$ can be replaced by its corresponding equi-satisfiable $\FSNF~\DNF$ in $\theta_1$. Now suppose $\md(\zeta) = h$.  Call each $\omega_i$ as a conjunct.

 In the first step,  consider the conjuncts with exactly $1$ free variable. Let $I_z = \{ \omega_i \mid \FV(\omega) = \{z\}\}$ for $z\in \{x,y\}$ be the index of all conjuncts where $z$ is the only free variable.  Let $z_1,z_2$ be the variables $x,y$ in either order.  Pick any $\omega_i \in I_{z_1}$ which means $z_2$ is bounded in $\omega_i$. Hence, without loss of generality, $\omega_i$ is of the form $\forall z_2~\eta$. We will first ensure that $\eta$ is  $\Qsafe$. This is done by iteratively removing the non-$\modules$ from $\components(\eta)$ and replacing it with a equi-satisfiable $\Qsafe$ formula.  Set $\chi_0 := \forall z_2~ \eta$. 
 \begin{enumerate}
  \item[a.] if there  is some strict subformula of the form $Q z_2\ \lambda \in \components(\chi_0)$ where $\lambda$ is $\Qsafe$,  let $P$ be a new (intermediate) unary predicate. Define $\chi_1 :=  \chi_0[P(z_1) / Q z_2~\lambda]$ and $\tau_1 := P(z_1) \Leftrightarrow Q z_2~ \lambda$. Note that if $Q = \forall$ then $\tau_1$ can be equivalently written as  $\forall z_2~ ( \neg P(z_1) \lor \lambda) \land \exists z_2~ ( P(z_1) \lor \neg \lambda)$ and if $Q = \exists$ then $\tau_1$ will be $\exists z_2~ (\neg P(z_1) \lor \lambda) \land \forall z_2~ (P(z_1) \lor \neg \lambda)$. 
 \item[b.] if there is some strict subformula  of the form $Q z_1~ \lambda \in \components(\chi_0)$ where $\lambda$ is $\Qsafe$, let $P$ be a new unary predicate. Define $\chi_1 :=  \chi_0[P(z_2) / Q z_1~ \lambda]$ and $\tau_1 := \forall z_2~ (P(z_2) \Leftrightarrow Q z_1~\lambda)$. Again, that if $Q = \forall$ then $\tau_1$ is equivalent to  $\forall z_2 \forall z_1 ( \neg P(z_2) \lor \lambda) \land \forall z_2 \exists z_1 ( P(z_2) \lor \neg \lambda)$ and if $Q = \exists$ then $\tau_1$ is $ \forall z_2 \forall z_1 ~ (P(z_2) \lor \neg \lambda)\land \forall z_2 \exists z_1~ (\neg P(z_2) \lor \lambda)$.
\end{enumerate}

 Now remove the conjunct $\omega_i$ from $\zeta$ and replace it with $\chi_1 \land \tau_1$. Note that $\chi_1$ has at least one less quantifier than $\chi_0$ and $\tau_1$ introduces either  conjuncts with no free variables or a formula with one free variable of the form $Q z~\lambda$ where $\lambda$ is $\Qsafe$. To see that this step preserves equi-satisfiability, note that in both cases, $\chi_1 \land \tau_1$ implies $\chi_0$ and for the other direction, we can define the valuation $\rho$ for the new unary predicate $P$ appropriately in the given model in which $\psi$ is satisfiable.

 Repeat this step for $\chi_1, \chi_2 ,\ldots, \chi_m$ till $\chi_m$ is of the form $\forall z_2 \lambda$ where $\lambda$ is $\Qsafe$. Then we would have $\chi_m \land \tau_1\ldots\land \tau_m$ as new conjuncts replacing $\omega_i$ in $\zeta$.  Now this step increases the number of conjuncts in $\zeta$ which have no free variables, but all new conjuncts with one free variable is of the form $Q z~ \lambda$ where $\lambda$ is $\Qsafe$ (it needs to be further refined since it is not yet $\Qsafe~\FSNF$). 
 
 Repeat this step for all $\omega_i \in I_z$ for $z\in \{x,y\}$. Let the resulting clause be $\zeta_1$ which is equi-satisfiable to $\zeta$. Now for $z\in \{x,y\}$, if there are two conjuncts of the form $\forall z ~\lambda$ and  $\forall z~ \lambda'$ in $\zeta_1$, remove both of them from  and add $\forall z ~ (\lambda \land \lambda')$ to $\zeta_1$. Repeat this till there a single conjunct in $\zeta_1$ of the form $\forall z~ \gamma^z$ for each $z\in \{x,y\}$ where $\gamma^z$ is $\Qsafe$.  Note that there are some new unary predicates introduced and hence this intermediate formula $\zeta_1$ is not in $\PTML^2$ (but is in $\TML^2$). 
 
 \bigskip
Let $\zeta_1:= \omega'_1 \land \ldots\land \omega'_{n_1}$ which is the result of rewriting of the clause $\zeta$ after the above steps.  Now  consider conjuncts with no free variables and make them $\Qsafe$. Let $I = \{\omega'_i \mid \FV(\psi') = \{ x,y\}\}$. For any $\omega'_i \in I$, since neither variable is free, without loss of generality assume that $\omega'_i$ is of the form $\forall x~ \eta$.

Pick any $\omega'_i\in I$ and set $\chi_0 := \forall x ~\eta$ and  $z_1,z_2$ refer to $x,y$ in either order. If  $Q z_2~ \lambda \in \components(\eta)$, let $P$ be a new unary predicate. 
Define $\chi_1 := \chi_0[P(z_1) / Q z_2~ \lambda]$ and $\tau_1 := \forall z_1~ (P(z_1) \Leftrightarrow Qz_2~ \lambda$). Similar to previous step, $\tau_1$ can be equivalently written as two conjuncts of the form $\forall z_1 \forall z_2 ~\lambda \land \forall z_1 \exists z_2~ \lambda$ where $\lambda$ and $\lambda'$ are $\Qsafe$ formulas (but not $\Qsafe~\FSNF$, yet). 

 Now remove the conjunct $\omega'_i$ from $\zeta_1$ and replace it with $\chi_1 \land \tau_1$. Note that $\chi_1$ has at least one less quantifier than $\chi_0$ and $\tau_1$ introduces only  conjuncts of the form $Q_1 z_1 Q_2~z_2~\lambda$ where $\lambda$ is $\Qsafe$. Again for the equi-satisfiability argument, note that $\chi_1 \land \tau_1 \implies \chi_0$ is a validity and for the other direction, the new predicates can be interpreted appropriately in the same model of $\zeta_1$.

 Repeat this step for $\chi_1, \chi_2 ,\ldots, \chi_m$ till $\chi_m$ is of the form $\forall x \lambda$ where $\lambda$ is $\Qsafe$. Then we would have $\chi_m \land \tau_1\ldots\land \tau_m$ as new conjuncts replacing $\omega'_i$.   
 Now rename the variables appropriately in the newly introduced conjuncts so that we have formulas only of the form $\forall x \forall y~ \lambda$ or $\forall x\exists y~ \lambda'$ where $\lambda, \lambda'$ are  $\Qsafe$ formulas. 
 
 Repeat this step for all $\omega'_i \in I$. Let the resulting conjunct be $\zeta_2$ which is equi-satisfiable to $\zeta_1$.  Now if there are two conjuncts of the form $\forall x \forall y ~\lambda$ and  $\forall x \forall y ~\lambda'$ in $\zeta_2$, remove both of them and add a new conjunct $\forall x \forall y~ (\lambda \land \lambda')$ to $\zeta_2$. Repeat this till at most one conjunct  the form $\forall x\forall y~ \lambda$ in $\zeta_2$.  Note that we still have unary predicates in $\zeta_2$ and hence $\zeta_2$ is also a $\TML^2$ formula but not a $\PTML^2$ formula. Further, all subformulas inside the scope of quantifiers are now $\Qsafe$, but needs to be converted into $\Qsafe~\FSNF$.

\bigskip
Let $\zeta_2:= \omega''_1 \land \ldots\land \omega''_{n_2}$ be the resulting formula after the above steps. Now to eliminate the newly introduced unary predicates, apply the translation in definition \ref{def: TML to PTML translation for satisfiability} to $\zeta_2$ and  obtain an equi-satisfiable $\PTML$ formula $\zeta_3$. It is clear from the construction that the new predicates are introduced only at the outermost level (not inside the scope of any modality).  Thus, in the translation occurrence of the newly introduced predicate of the form $P(z)$ will be replaced by $\Diamond_z (\neg r \land p)$ and $\neg P(z)$ will be translated to $\neg \Diamond_z (\neg r \land p)$ which can be equivalently written as $\Box_z (r \lor \neg p)$.

\bigskip
Now consider conjuncts that are modal formulas. For $z\in \{x,y\}$, if there are two conjuncts of the form $\Box_z ~\lambda$ and  $\Box_z~ \lambda'$ in $\zeta_3$, remove both of them from  and add $\Box_z ~ (\lambda \land \lambda')$ to $\zeta_3$. Repeat this till there at most one conjunct in $\zeta_3$ of the form $\Box_z~ \alpha^z$ for each $z\in \{x,y\}$. 
Note that this step preserves equi-satisfiability because of the validity $\forall z~\big((\Box_z \alpha \land \Box_z \beta) \Leftrightarrow \Box_z (\alpha \land \beta)\big)$.

By rearranging the conjuncts, we obtain the formula $\zeta_3$ in the form:
$$\bigwedge\limits_{i\le a} s_i \land
 \bigwedge\limits_{z\in \{x,y\}} (\Box_z \alpha^z \land \bigwedge\limits_{j\le m_z}\Diamond_z \beta^z_j )\ \land\
  \bigwedge\limits_{z\in \{x,y\}} (\forall z~ \gamma^z \land \bigwedge\limits_{k\le n_z} \exists z\ \delta^z_k)\ \land\
 \forall x \forall y\ \phi \land \bigwedge\limits_{l\le b}\forall x \exists y\ \psi_l$$ 

where $\gamma, \delta^z_k, \phi$ and $\psi_l$ are $\Qsafe$. 

\bigskip

As a final step, we need to ensure that
$\alpha^x,\alpha^y, \beta^x_1,\ldots ,\beta^x_{m_x}, \beta^y_1,\ldots, \beta^y_{m_y}$ are formulas in $\FSNF~\DNF$ and $\gamma^x,\gamma^y, \delta^x_{1},\ldots, \delta^x_{n_x},\delta^y_1,\ldots ,\delta^y_{n_y}, \phi, \psi_1,\ldots, \psi_b$ are not just $\Qsafe$, but also $\Qsafe~\FSNF$ formulas.

 Note  $\alpha^z,\beta^z_j$ have modal depth less than $h$. Hence, inductively we have equi-satisfiable $\FSNF~\DNF$  which each of them can be correspondingly replaced  in $\zeta_3$.
 This preserves equi-satisfiability since we can inductive maintain that the translated formulas are satisfied in the same model of the given formula by just by tweaking the $\rho$ function.

To translate the formulas $\gamma^x,\gamma^y, \delta^x_{1},\ldots, \delta^x_{n_x},\delta^y_1,\ldots ,\delta^y_{n_y}, \phi, \psi_1,\ldots, \psi_b$ into $\Qsafe~\FSNF$, first note that these formulas are already $\Qsafe$. Now for every $\Delta_z \chi \in \components(\mu)$ for $\mu$ is one of the above formulas, we have $\md(\chi) \le h$. Again,  inductively we have equi-satisfiable $\FSNF$ formulas for each of them. Replacing each such subformula with their corresponding $\FSNF~\DNF$ formula gives us the required $\FSNF$ conjunction $\zeta_4$ which is equi-satisfiable to $\zeta$ that we started with. Thus $\zeta$ can be replaced by $\zeta_4$ in $\theta_1$.

\bigskip
Repeating this for every $\zeta \in I_\theta$ and replacing it in $\theta_1$ we obtain an  equi-satisfiable $\FSNF~\DNF$ for $\theta$.
\end{proof}

Since we repeatedly convert the formula into $\DNF$ (inside the scope of every modality), if we start with a formula of length $n$, the final translated formula has length $2^{O(n^2)}$. However, observe that the number of $\modules$ in the translated formula is linear in the size of the given formula $\phi$. Furthermore, the given formula is satisfiable in a model $M$ iff the translation is satisfiable in $M$ with appropriate modification of the $\rho$ (valuation function).

\subsection{Bounded agent property}
Now we prove that any formula $\theta \in \PTML^2$ in $\FSNF~\DNF$ is satisfiable iff $\theta$ is satisfiable in a model $M$ where the size of $D$ is bounded. Note that for any $\PTML$ formula $\theta$, if $M,w,\sigma \models \theta$ then $M^T,w,\sigma \models \theta$ where $M^T$ is the standard tree unravelling of $M$ with $w$ as root \cite{PR17}. Further, $M^T$ can be restricted to be of height at most $\md(\theta)$.  Hence, we restrict our attention to tree models of finite depth.

 First we define the notion of types for agents at every world. In classical $\FO^2$ the $2$-types are defined on atomic predicates. In $\PTML^2$ we need to define the types with respect to $\modules$. In any given tree model $M$ rooted at $r$, for any $w\in W$ and $c,d \in D_w$ the 2-$\type$ of $(c,d)$ at $w$ is simply the set of all $\modules$ that are true at $w$  where the two variables are assigned $c,d$ in either order. The 1-$\type$ of $c$ at $w$ includes the set of all $\modules$ that are true at $w$  when both $x,y$ are assigned $c$. Further, for every non-root node $w$, suppose ($w'\xrightarrow{a}w$) then the 1-$\type$ of any $c\in D_w$ should capture how $c$ behaves with respect to $a$ and the $1$-$\type(w,c)$ should also include the information of how $c$ acts with respect to $d$, for every $d\in D_w$. Thus the 1-$\type$ of $c$ at $w$ is given by a 3-tuple where the first component is the set of all $\modules$ that are true when both $x,y$ are assigned $c$, the second component captures how $c$ behaves with respect to the incoming edge of $w$ and the  third component is a set of subsets of formulas such that for each $d\in D_w$ there is a corresponding subset of formulas capturing the $2$-$\type$ of $c,d$.
To ensure that the type definition also carries the information of the height of the world $w$, if $w$ is at height $h$ then we restrict 1-$\type$ and 2-$\type$ at $w$ to $\modules$ of modal depth at most $\md(\phi) -h$.

\bigskip

For any formula $\phi$, let $\SF(\phi)$ be the set of all subformulas of $\phi$ closed under negation.  We always  assume\footnote{Let $p_0$ be some proposition occurring in $\phi$, then $\top$ is defined as $p_0 \lor \neg p_0$.} that $\top \in \SF(\phi)$. Let $\SF^h(\phi) \subseteq \SF(\phi)$ be the set of all subformulas of modal depth at most $\md(\phi)-h$. Thus we have $\SF(\phi) = \SF^0(\phi)  \supseteq \SF^1(\phi)\supseteq\ldots\supseteq \SF^{\md(\phi)}(\phi)$.

\begin{definition}[$\PTML~\type$]
\label{def: types} For any $\PTML^2$ formula $\phi$ and for any tree model $M$ rooted at $r$ with height at most $\md(\phi)$, for all $w\in W$ at height $h$:
\begin{itemize}
\item For all $c,d\in \delta(w)$,  define  2-$type(w,c,d) = (\Gamma_{xy};\Gamma_{yx})$ where\\
$\Gamma_{xy} = \{ \psi(x,y)  \in \SF^h(\phi) \mid M,w \models  \psi(c,d)\}$ and \\
$\Gamma_{yx} =  \{ \psi(x,y) \in \SF^h(\phi) \mid   M,w \models  \psi(d,c)\} $.

\item  If $w$ is a non root node, (say $w'\xrightarrow{a}w$) then for all $c\in \delta(w)$ define $1$-$\type(w,c) = (\Lambda_1;\Lambda_2;\Lambda_3)$ where 
$\Lambda_1 = 2$-$\type(w,c,c)$ and $\Lambda_2 = 2$-$\type(w,c,a)$ and 
$\Lambda_3 = \{2$-$\type(w,c,d) \mid d\in \delta(w)\}$.

\item For the root node $r$, for all $c\in \delta(r)$ define $1$-$\type(w,c) = (\Lambda_1;\{ \top\};\Lambda_3)$ where \\
$\Lambda_1 = 2$-$\type(w,c,c)$ and 
$\Lambda_3 = \{2$-$\type(w,c,d) \mid d\in \delta(w)\}$.
 \end{itemize}

\end{definition}

The second component of 1-$\type(r,c)$ is added to maintain uniformity. For all $w\in W$ define 1-$\type(w) = \{$1-$\type(w,c) \mid c\in D_w\}$ and 2-$\type(w) = \{ $2-$\type(w,c,d) \mid c,d \in D_w\}$. We use $\Lambda, \Pi$ to represent elements of 1-$\type(w)$ and $\Lambda_1,\Pi_2$ etc for the respective components.

\bigskip

If a formula $\theta$ is satisfiable in a tree model, the strategy is to inductively come up with bounded agent models for every subtree of the given tree (based on types), starting from leaves to the root.  While doing this, when we add new type based agents to a world at height $h$, to maintain monotonicity, we need to propagate the newly added agents throughout its descendants. For this, we define the notion of  extending any tree model by addition of some new set of agents.

Suppose in a tree model $M$, world $w$ has local agent set $D_w$ and we want to extend $D_w$ to $D_w \cup C$, then first we have $\Omega : C \mapsto D_w$ which assigns every new agent to some already existing agent. The intended meaning is that  the newly added agent $c\in C$ at $w$ mimics the  `type'  of $\Omega(c)$.
  If  $w$ is a leaf node, we can simply extend $\delta(w)$ to $D_w\cup C$. If $w$ is at some arbitrary height, along with \emph{adding the new agents to the live agent set} to $w$, we also need to create successors for every $c\in C$, one for each successor subtree of $\Omega(c)$ and inductively add $C$ to all the successor subtrees. 

\begin{definition}[Model extension]
\label{def: model extension}
Suppose $M$ is a tree model rooted at $r$ with finite agent set $D$ and for every $w\in W$ let  $M^w$ be the subtree rooted at $w$. Let $C$ be some finite set such that $C\cap D = \emptyset$ and for any $w\in W$ let   $\Omega: C \mapsto D_w$ be a function mapping $C$ to agent set live at $w$. Define the operation of  `adding $C$ to $M^w$ guided by $\Omega$'  by induction on the height of $w$ to obtain a new subtree rooted at $w$ (denoted by $M^w_{(C,\Omega)}$ and  the components denoted by $\delta', \rho'$ etc).
\begin{itemize}
\item If $w$ is a leaf, then $M^w_{(C,\Omega)}$ is a tree with a single node $w$ with new $\delta'(w) = \delta(w) \cup C$ and $\rho'(w) = \rho(w)$.
\item If  $w$ is at height $h$ then the new tree $M^w_{(C,\Omega)}$ is obtained from $M^w$ rooted at $w$ with new $\delta'(w) = \delta(w) \cup C$ and $\rho'(w) = \rho(w)$  and replacing all the subtrees $M^u$ rooted at every successor  $u$ of $w$ by $M^u_{(C,\Omega)}$. Furthermore, for every $c\in C$ and every $(w,\Omega(c),u) \in R$ create a new copy of $M^u_{(C,\Omega)}$ and rename its root as $u^c$ and add an edge $(w,c,u^c)$ to $R'$.
\end{itemize}
\end{definition}

 Since we do not have {\em equality} in the language, this transformation will still continue to satisfy the same formulas.

\begin{lemma}
\label{lemma: adding new agents is ok}
Let $M$ be any tree model of finite depth rooted at $r$ with finite agent set $D$ and let $w\in W$. Let  $M^w_{(C,\Omega)}$ (rooted at $w$) be an appropriate model extension of $M^w$ (rooted at $w$). For any interpretation $\sigma : \Var \mapsto (C\cup D_w)$ let $ \hat{\sigma}: \Var \mapsto D_w$ where $\hat{\sigma}(x) = \Omega(\sigma(x))$ if $\sigma(x) \in C$ and $\hat{\sigma}(x) = \sigma(x)$ if $\sigma(x) \in D_w$. Then for all $u\in W$ which is a descendant of $w$ in $M$ and for all $\sigma: \Var \mapsto (C\cup D_w)$ and for all $\PTML$ formula $\phi$, we have $M^w_{(C,\Omega)},u,\sigma \models \phi$  iff $M,u,\hat{\sigma} \models \phi$.

\end{lemma}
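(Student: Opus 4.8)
The plan is to prove this by induction on the height of $w$ in $M$, with a nested structural induction on $\phi$ that, as it turns out, has to be carried out only at the root $w$ itself. The point is that the only clause of the $\PTML$ semantics that descends into the tree is the modal clause, so there is nothing to do by hand at a descendant $u \neq w$: if $u_0$ is the child of $w$ on the path from $w$ to $u$, then by Definition~\ref{def: model extension} the subtree of $M^w_{(C,\Omega)}$ rooted at $u_0$ is \emph{literally} $M^{u_0}_{(C,\Omega)}$, and since $u_0$ has strictly smaller height and $\Omega$ maps $C$ into $D_w \subseteq D_{u_0}$ (monotonicity, which is also what makes the recursive extensions $M^{u_0}_{(C,\Omega)}$, and the copies below, well defined), the induction hypothesis applied to $M^{u_0}$ gives the equivalence for $u$, every $\sigma$ and every $\phi$ (with $\hat\sigma$ computed the same way there). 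So the actual work is: the base case (leaves), and, in the inductive step, the equivalence at $w$ only.

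For the base case $w$ is a leaf, so $M^w_{(C,\Omega)}$ is the single world $w$ with $\delta'(w) = \delta(w) \cup C$ and $\rho'(w) = \rho(w)$, and $w$ is its own only descendant. The structural induction on $\phi$ is routine: a proposition's value at $w$ depends only on $\rho(w) = \rho'(w)$ and not on the assignment, the Boolean cases are immediate, and $\Box_x\phi$ holds vacuously (and $\Diamond_x\phi$ fails) in both models since a leaf has no successors. The only interesting case is $\exists x\,\phi$: here $M^w_{(C,\Omega)},w,\sigma \models \exists x\,\phi$ iff $M^w_{(C,\Omega)},w,\sigma_{[x\mapsto d]} \models \phi$ for some $d \in \delta(w) \cup C$, which by the structural hypothesis is iff $M,w,\widehat{\sigma_{[x\mapsto d]}} \models \phi$ for some such $d$; since $\widehat{\sigma_{[x\mapsto d]}} = \hat\sigma_{[x\mapsto \Omega(d)]}$ when $d \in C$ and $\hat\sigma_{[x\mapsto d]}$ when $d \in \delta(w)$, as $d$ ranges over $\delta(w)\cup C$ the resulting value for $x$ ranges over exactly $\delta(w)$ (because $\Omega(C)\subseteq\delta(w)$, and every element of $\delta(w)$ is already obtained directly). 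Hence this is iff $M,w,\hat\sigma\models\exists x\,\phi$. This ``collapse'' of the quantifier range is precisely where the absence of equality is used: a new agent $c$ behaves as an indistinguishable copy of $\Omega(c)$, which would fail if one could assert $c \neq \Omega(c)$.

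For the inductive step let $w$ be at height $h$, not a leaf, and assume the lemma for all subtrees rooted at nodes of smaller height. The equivalence at every descendant $u\neq w$ is immediate as above, so we prove it at $w$ by structural induction on $\phi$. The propositional, Boolean and quantifier cases go through verbatim as in the base case (using $\rho'(w)=\rho(w)$, $\delta'(w)=\delta(w)\cup C$, and the structural hypothesis for subformulas evaluated at $w$). For $\Diamond_x\phi$ put $a := \sigma(x) \in C \cup D_w$ and split on where $a$ lives. If $a \in D_w$, then by Definition~\ref{def: model extension} the $a$-successors of $w$ in $M^w_{(C,\Omega)}$ are precisely the roots of the trees $M^{u_0}_{(C,\Omega)}$ for the children $u_0$ with $(w,a,u_0)\in R$ (the freshly created copies are reached only along $C$-labelled edges, and $C\cap D_w=\emptyset$), and $\hat\sigma(x)=a$; applying the height induction hypothesis to each such $M^{u_0}$ yields $M^w_{(C,\Omega)},w,\sigma\models\Diamond_x\phi$ iff $M,w,\hat\sigma\models\Diamond_x\phi$. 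If $a = c \in C$, then the $c$-successors of $w$ are exactly the fresh copies $u^c$ of $M^u_{(C,\Omega)}$ for the $u$ with $(w,\Omega(c),u)\in R$; since each $u^c$ is a verbatim copy of $M^u_{(C,\Omega)}$ (up to renaming of worlds, agents unchanged) and $\hat\sigma(x)=\Omega(c)$, the height induction hypothesis gives $M^w_{(C,\Omega)},w,\sigma\models\Diamond_x\phi$ iff $M,w,\hat\sigma\models\Diamond_{\Omega(c)}\phi$, which is $M,w,\hat\sigma\models\Diamond_x\phi$. The case $\Box_x\phi$ is dual, with ``for all successors'' in place of ``for some successor''.

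The main difficulty is organisational rather than conceptual: one must set up the two nested inductions so that the modal case at the root dispatches correctly to the height induction hypothesis for both the recursively extended original subtrees and the newly created copies, and one must check in the quantifier case that passing a new agent $c$ to $\Omega(c)$ introduces no new domain elements. Both of these hinge on the fact that in $\PTML$ there is no equality and propositions are evaluated at worlds, not at agents, so a new agent may be freely duplicated from an old one; with equality the whole construction would break, as the remark following Definition~\ref{def: model extension} already points out.
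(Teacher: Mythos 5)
Your proof is correct and follows essentially the same route as the paper: reverse induction on the height of $w$ with a nested structural induction on $\phi$, splitting the modal case according to whether $\sigma(x)$ lies in $C$ or in $D_w$, and collapsing the quantifier range via $\Omega$. You are in fact slightly more careful than the paper in two spots the paper leaves implicit, namely dispatching the descendants $u \neq w$ to the height induction hypothesis and spelling out the identity $\widehat{\sigma_{[x\mapsto d]}} = \hat\sigma_{[x\mapsto \Omega(d)]}$ for $d \in C$ in the $\exists$ case.
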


\begin{proof}
The proof is by reverse induction on the height of $w$. In the base case $w$ is a leaf. Note that $\rho(w)$ remains the same both the models. Hence all propositional formulas continue to equi-satisfy in both the models at $w$. Since $w$ is a leaf, there are no descendants in both the models and hence all modal formulas continue to equi-satisfy. Finally, since $\delta$ is non-empty in both the models at $w$,  for all formula $\alpha \in \PTML$ we have $M^w_{(\C,\Omega)},w,\sigma \models Q~ x~\alpha$  iff $M,w,\hat{\sigma} \models Q~ x~\alpha$ where for $Q\in \{\forall,\exists\}$.

\medskip
For the induction step, let $w$ be at height $h$. Now we induct on the structure of $\phi$. Again, if $\phi$ is a proposition, then the claim follows since $\rho(w)$ remains same. The cases of $\neg$ and $\land$ are standard.

For the case of $\Diamond_x ~\phi$, we need to consider two cases: when $\sigma(x) \in \C$ and $\sigma(x) \in D_w$.
\begin{itemize}
\item If $\sigma(x) \in \C$ then let $\Omega(c) = d$ and hence $\hat{\sigma}(x) = d$. Now, if
$M^w_{(\C,\Omega)},w,\sigma \models \Diamond_x \phi$ then there is some $(w,c,w') \in R^w_{(\C,\Omega)}$ such that $M^w_{(C,\Omega)},w',\sigma \models \phi$. By construction, $w'$ is of the form $u^c$ and the subtree rooted at $u^c$ is a copy of $M^u_{(\C,\Omega)}$ for some $(w,d,u) \in R$. Hence 
$M^u_{(\C,\Omega)},u,\sigma \models \phi$  and by induction hypothesis $M,u,\hat{\sigma} \models \phi$. Thus, $M,w,\hat{\sigma} \models \Diamond_x \phi$.

Suppose $M,w,\hat{\sigma} \models \Diamond_x \phi$, then there is some $(w,d,u)\in R$ such that $M,u,\hat{\sigma} \models \phi$.  By induction hypothesis, $M^u_{(\C,\Omega)},u,\sigma \models \phi$ . Now, since $\Omega(c) = d$, by construction there is $(w,d,u^c) \in R^w_{(C,\Omega)}$ such that the sub-tree rooted at $u^c$ is a copy of $M^u_{(\C,\Omega)}$. Hence $M^w_{(\C,\Omega)},u^c,\sigma \models \phi$. Thus   $M^w_{(\C,\Omega)},w,\sigma \models \Diamond_x \phi$.

\item If $\sigma(x) \in D_w$, let $\sigma(x) = d$. Now 
$M^w_{(C,\Omega)},w,\sigma \models \Diamond_x\phi$ iff there is some $(w,d,u) \in R^w_{(\C,\Omega)}$ such that $M^w_{(C,\Omega)},u,\sigma \models \phi$ iff (by construction) $(w,d,u) \in R$ and the sub-tree rooted at $u$ in $M^w_{(C,\Omega)}$ is a copy of $M^u_{(\C,\Omega)}$ iff $M^u_{(\C,\Omega)},u,\sigma \models \phi$ iff (by induction)  $M,u,\hat{\sigma} \models \phi$ iff $M,w,\hat{\sigma} \models \Diamond_x \phi$.
\end{itemize}
 
 For the case of $\exists x~\phi$, we have $M^w_{(C,\Omega)},w,\sigma \models \exists x~\phi$ iff there is some $c\in \C\cup D_w$ such that $M^w_{(C,\Omega)},w,\sigma_{[x\mapsto c]}\models \phi$ iff (by induction) $M,w,\hat{\sigma}_{[x\mapsto c]} \models \phi$ iff $M,w,\hat{\sigma} \models \exists x~\phi$.
\end{proof}

\bigbreak

For any formula in the normal form, we use the same notations as in  Def. \ref{def: FSNF normal form}. For a given formula $\theta \in \PTML^2$ in $\FSNF~\DNF$ form, let $\AllDelta_\theta^x = \{ \exists y~\delta^x \in \SF(\theta)\}$. Similarly we have $\AllDelta_\theta^y = \{ \exists x~\delta^y \in \SF(\theta)\}$ and $\AllPsi_\theta = \{ \forall x \exists y ~\psi \in \SF(\phi)\}$.

For any tree model $M$, let $\junk \not\in D$. For every $w\in W$ and for all $\exists y~\delta \in \AllDelta_\theta^x$ let the function $g^w_\delta : D_w \mapsto D_w\cup \{\junk\}$ be a mapping such that $M,w \models \delta(c,g^w_\delta(c))$ and $g^w_\delta(c) = \junk$ only if there is no $d\in D_w$ such that $M,w \models \delta(c,d)$.
Similarly for all $\exists x~\delta \in \AllDelta_\theta^y$ let $h^w_\delta : D_w \mapsto D_w\cup \{\junk\}$ such that $M,w \models \delta(h^w_\delta(c),h)$ and $h^w_\delta(c) = \junk$ only if there is no $d\in D_w$ such that $M,w \models \delta(d,c)$. Again for all $\forall x \exists y~\psi \in \AllPsi_\theta$ let $f^w_\psi : D_w \mapsto D_w\cup \{\junk\}$ such that $M,w \models \psi(c,f^w_\psi(c))$ and $f^w_\psi(c) = \junk$ only if there is no $d\in D_w$ such that $M,w \models \psi(c,d)$. 

The functions $g,h,f$ provide the  witnesses at a world for every agent (if it exists) for the existential formulas respectively.

\bigskip

\begin{theorem}
\label{thm: 2 var PTML has bounded agent property}
Let $\theta \in \PTML^2$ be in an $\FSNF~\DNF$ sentence. Then $\theta$ is satisfiable iff $\theta$ is satisfiable in a model with bounded number of agents.
\end{theorem}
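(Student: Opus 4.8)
\emph{Proof plan.} The direction $(\Leftarrow)$ is immediate; for $(\Rightarrow)$ the plan is a reverse induction on modal depth wrapped around the classical bounded-model construction for $\FO^2$ (the king/pawn argument of Gr\"{a}del, Kolaitis and Vardi), using the type machinery of Definition~\ref{def: types} together with the model-extension machinery of Definition~\ref{def: model extension} and Lemma~\ref{lemma: adding new agents is ok} to shrink local agent sets while respecting monotonicity. If $\theta$ is satisfiable then, by the remarks preceding Definition~\ref{def: types}, it holds at the root $r$ of a tree model $M$ of height at most $\md(\theta)$, and I work with such an $M$. The first point is that the number of $1$-$\type$s and $2$-$\type$s occurring in any such model of $\theta$ is bounded by a number $T=T(\theta)$ depending only on $\theta$: a $2$-$\type$ at height $h$ is a pair of subsets of the finite set $\SF^h(\theta)$, and a $1$-$\type$ is a triple whose first two components are $2$-$\type$s and whose third is a set of $2$-$\type$s, and $\md(\theta)$ is fixed.

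\emph{Local surgery.} Fix $w\in W$ at height $h$, with incoming edge $w'\xrightarrow{a}w$ when $w\neq r$. Grouping the agents of $D_w$ by their $1$-$\type$, I keep for each realised $1$-$\type$ a bounded number $K=K(\theta)$ of representatives (the unique realiser of a ``king'' $1$-$\type$; $K$ copies of a ``pawn''), always retaining the incoming-edge agent $a$, and arranging --- by re-routing the witness functions $g^w_\delta,h^w_\delta,f^w_\psi$ through agents of the same $1$-$\type$ and re-declaring the relevant $2$-$\type$s --- that every retained agent still finds, among the retained agents, a witness for each $\exists z~\delta\in\AllDelta_\theta^z$ and each $\forall x\exists y~\psi\in\AllPsi_\theta$ it demands at $w$; since there is no equality, a same-type copy is an acceptable stand-in. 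The universal conjuncts $\forall z~\gamma^z$ and $\forall x\forall y~\phi$ survive automatically, as we only delete agents. For the modules: for each retained $c$ and each $\Diamond_z\beta\in\SF^h(\theta)$ true at $(w,c)$ keep one $z$-successor of $c$ witnessing $\beta$, and for each $\Box_z\alpha\in\SF^h(\theta)$ false at $(w,c)$ keep one $z$-successor falsifying $\alpha$ (a retained successor still satisfies every $\Box_z\alpha$ that was true). This yields $D^{*}_w\subseteq D_w$ with $|D^{*}_w|\le T\cdot K$ and out-degree at most $2|\SF(\theta)|$ per agent.

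\emph{Bottom-up assembly.} I process the tree from the leaves towards $r$, maintaining the invariant that a processed world $u$ of height $h$ has been replaced by a finite tree $N^u$ with bounded agent set whose root realises the same $1$-$\type$s as $u$ in $M$ and preserves, for all assignments into the retained agents, the truth of every formula in $\SF^h(\theta)$. At $w$ I perform the surgery and attach the previously built $N^u$ to the retained successors; monotonicity then demands $D^{*}_w\subseteq\delta'(u)$ for each such $u$, which I obtain by the operation ``adding $D^{*}_w$ to $N^u$ guided by $\Omega$'' of Definition~\ref{def: model extension}, with $\Omega$ sending each pushed-down agent to the retained agent at $u$ it mimics; Lemma~\ref{lemma: adding new agents is ok} keeps all formulas undisturbed. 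When the parent of $w$ is later processed and $a$ is re-routed to a same-$1$-$\type$ representative $a^{*}$, the copy of $N^w$ attached below $a^{*}$ already carries the right data, since (Definition~\ref{def: types}) the incoming-edge component of a $1$-$\type$ is merely a $2$-$\type$. As the height is $\le\md(\theta)$, the out-degree is $\le 2|\SF(\theta)|$, and each world absorbs at most $\md(\theta)$ batches of $\le T\cdot K$ agents, the final tree $N^r$ is finite with agent set bounded by a computable function of $\theta$; by the invariant, $N^r,r\models\theta$.

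\emph{Main obstacle.} The crux is performing these agent-deleting surgeries coherently across the whole tree: simultaneously (i)~re-routing existential witnesses in the $\FO^2$ king/pawn style, (ii)~preserving monotonicity, which forces every agent retained at a world to reappear, with the correct $1$-$\type$, at all of its descendants --- precisely the purpose of Definition~\ref{def: model extension} and Lemma~\ref{lemma: adding new agents is ok} --- and (iii)~keeping the $1$-$\type$ of each retained agent intact, which links the surgery at a world both to the surgery at its parent (through the incoming-edge component) and to which agents survive at the world (through the third, ``$2$-$\type$-set'' component). Getting (iii) to go through is the reason the $1$-$\type$ in Definition~\ref{def: types} is defined so richly; and the lack of equality is what makes the duplications and re-routings in (i) and (ii) sound.
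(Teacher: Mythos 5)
Your architecture coincides with the paper's: work in a tree model of height at most $\md(\theta)$, classify agents by the $1$-$\type$s and $2$-$\type$s of Definition~\ref{def: types}, keep a bounded multiplicity of each realised $1$-$\type$ per world (kings and pawns), push the retained agents down through all descendant subtrees via Definition~\ref{def: model extension} and Lemma~\ref{lemma: adding new agents is ok} to restore monotonicity, and verify everything by reverse induction on height. (The paper builds a fresh agent set $1$-$\type(w)\times[1..q]\times\{0,1,2\}$ rather than selecting a subset of $D_w$, but that difference is cosmetic.)

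The problem is that the step you yourself label ``the main obstacle'' is exactly the content of the theorem, and your proposal does not resolve it. Concretely: (a) when you re-route agent $a$'s witness for $\exists y\,\delta$ from $b$ to a same-$1$-$\type$ stand-in $b^{*}$, you need $2$-$\type(w,a,b^{*})=2$-$\type(w,a,b)$; since $2$-$\type$s contain modal formulas, this is a constraint on the successor subtrees of $b^{*}$ and on which agent of $\delta(u)$ the pushed-down copy of $a$ mimics in each such subtree --- ``re-declaring the relevant $2$-$\type$s'' is not an available operation, because $2$-$\type$s are determined by the model, and the re-routing must instead be justified by choosing the mimicry map $\Omega$ at every successor correctly. (b) A single retained agent must simultaneously play three roles --- carry its own identity down from its parent, serve as the $i$-th witness of other agents, and possess its own witnesses --- and these impose three \emph{a priori} conflicting requirements on $\Omega$ at each successor. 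The paper makes them compatible by giving each $1$-$\type$ exactly $q\times 3$ indexed copies and staggering the three roles across the indices $f$, $f{+}1 \bmod 3$, $f{-}1\bmod 3$; your single unspecified constant $K$ and unspecified re-routing rule do not show the constraints are simultaneously satisfiable. (c) The $\forall x\exists y~\psi$ conjuncts need a \emph{reverse} match: having replaced the witness $b_i$ of $a^w_\Lambda$ by the representative $a^w_{\Pi'}$ of its type, one must find some $d$ with $2$-$\type(w,a^w_{\Pi'},d)=2$-$\type(w,b_i,a^w_\Lambda)$ so that $a^w_{\Pi'}$ still has the required $\Diamond$-successors; this is precisely what the third component $\Lambda_3$ of a $1$-$\type$ exists for, and your argument never invokes it. Until (a)--(c) are carried out, the claim that every retained agent ``still finds, among the retained agents, a witness'' with the correct modal behaviour is an assertion of the conclusion, not a proof of it.
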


\begin{proof} It suffices to prove $(\Rightarrow)$. Let $M$ be a tree model of height at most $\md(\theta)$ rooted at $r$ such that $M,r \models \theta$.

   Let  $E_\theta = \AllDelta_\theta^x \cup \AllDelta_\theta^y\cup \AllPsi_\theta$ and hence $|\AllExists_\theta|  \le |\theta|$ (say $q$). Let $\AllExists_\theta = \{\chi_1,\ldots \chi_q\}$ be some enumeration. For  every $w\in W$ and $a\in \delta(w)$ let $\Wit(a) = \{ b_1\ldots b_q\}$ be the witnesses for $a$ where $b_i = g^w_\delta(c)$ if $\chi_i$ is of the form $\exists y~ \delta \in \AllDelta^x_\theta$ (similarly $b_i = h^w_\delta(c)$ or $b_i = f^w_\psi(c)$ corresponding to $\chi_i$ of the from $\exists x~\delta^y$ and $\forall x\exists y~\psi$ respectively). If $b_i = \junk$ then set $b_i = b$ for some arbitrary but fixed $b\in \delta(w)$.

\medskip
For all $w\in W$ and $\Lambda \in~$1-$\type(w)$ fix some $a^w_\Lambda \in \delta(w)$ such that 1-$\type(w,a^w_\Lambda) = \Lambda$. Furthermore, if $c$ is the incoming edge of $w$ and 1-$\type(w,c) = \Lambda$ then let $a^w_\Lambda = c$. Let $A^w = \{ a^w_\Lambda \mid \Lambda \in$1-$\type(w)\}$.

\medskip

Now we define the bounded agent model. For every $w\in W$ let $M^w$ be the subtree model  rooted at  $w\in W$. For every such $M^w$, we define a corresponding {\em type based model} with respect to $\theta$ (denoted by $T^w_\theta$ with components denoted by $\delta^w_\theta, \rho^w_\theta$ etc) inductively as follows:

\begin{itemize}
\item  If $w$ is a leaf then $T^w_\theta$ is a tree with a single node $w$ with\\ $\delta^w_\theta(w) = $ 1-$\type(w) \times [1\ldots q] \times \{0,1,2\}$ and $\rho^w_\theta(w) = \rho(w)$.

\item If $w$ is at height $h$,  $T^w_\theta$ is a tree rooted at $w$ with $\delta^w_\theta(w) =$ 1-$\type(w) \times [1\ldots q] \times \{0,1,2\}$ and $\rho^w_\theta(w) = \rho(w)$.

Before defining the successors of $w$ in $T^w_\theta$ note that  for every $(w,a,u)\in R$ we have  $ T^u_\theta$ which is  the inductively constructed type based model rooted at $u$.  
Also,  inductively we have $\delta^u_\theta(u) =$1-$\type(u)\times [1\ldots q] \times \{0,1,2\}$.
\medskip

Now for every $a^w_\Lambda\in A^w$ let  $\{b_1\ldots b_q\}$ be the corresponding witnesses as described above. For every successor $(w,a^w_\Lambda,u) \in R$ and  for every $1\le e\le q$ and $f\in\{0,1,2\}$, create a new copy of $T^u_\theta$ (call it $N^{(\Lambda,e,f)}$) and name its root as $u^{(\Lambda,e,f)}$.  Now add $\delta^w_\theta(w)$ to $N^{(\Lambda,e,f)}$ at $u^{(\Lambda,e,f)}$  guided by $\Omega$ where $\Omega$ is defined as follows:
\begin{itemize}
\item For all $\Pi \in 1$-$\type(w)$ we have $a^w_\Pi \in A^w$. Define $\Omega((\Pi,e,f)) = ($1-$\type(u,a^w_\Pi),e,f)$.
\item for all $k \le q$ if 1-$\type(u,b_k) = \Pi$ then $\Omega((\Pi,k,f')) = ($1-$\type(u,b_k),e,f)$\\ where $f' = f+1 \mod 3$. 

\item  Let $f' = f-1 \mod 3$. For all $\Pi \in ~$1-$\type(w)$ let the witness set of $a^w_\Pi$ be $\{d_1\ldots d_q\}$.

For all $l\le q$ if $1$-$\type(w,d_{l}) = \Lambda$
then by $\Lambda_3$ component, there is some $a\in \delta(w)$ such that $2$-$\type(w,d_{l},a^w_\Pi) = 2$-$\type(w,a^w_\Lambda,a)$. \ \
Define $\Omega((\Pi,l,f')) =($1-$\type(u,a),e,f)$.

\item For all $(\Pi,e',f')\in \delta^w_\theta(w)$ if $\Omega(\Pi,e',f')$  is not  yet defined, then set $\Omega(\Pi,e',f') = ($1-$\type(u,a^w_\Pi),e,f)$.
\end{itemize}

Add an edge  $(w, (\Lambda,e,f),u^{(\Lambda,e,f)})$ to $R^w_\theta$. 

\end{itemize}

Note that $\Omega$ is well defined since the first three steps are defined  for the indices $f, (f$+$1 \mod 3)$ and  $(f$-$1 \mod 3)$ respectively, which are always distinct. Also note that $T^r_\theta$ is a model that satisfies bounded agent property. Thus, it is sufficient to prove that $T^r_\theta,r \models \theta$.

 \bigskip

{\bf Claim.} For every $w\in W$ at height $h$ and for all $\lambda \in \SF^h(\theta)$  the following holds:
\begin{enumerate}
\item Suppose $\lambda$ is a sentence and $M,w \models \lambda$ then $T^w_\theta,w \models \lambda$.

\item If $\FV(\lambda) \subseteq \{x,y\}$ and for all $\Lambda,\Pi \in~$1-$\type(w)$ if     $M,w,[x\mapsto a^w_\Lambda,y \mapsto a^w_\Pi] \models  \lambda$  then  for all $1\le e \le q$ and $f \in \{0,1,2\}$ we have  $T^w_\theta,w,[x\mapsto (\Lambda,e,f), y\mapsto (\Pi,e,f)] \models \lambda$.
\end{enumerate}

 \medskip
Note that the theorem follows from claim (1), since $\theta$ is sentence and $M,r\models\theta$. 
\medskip

The proof of the claim is by reverse induction on $h$.  In the base case $h = \md(\theta)$ which implies $\lambda$ is modal free and hence is a $\DNF$ over $\literals$. Thus, both the claims follow since $\rho(w) = \rho^w_\theta(w)$.

For the induction step, let $w$ be at height $h$. Now we induct on the structure of $\lambda$. Again if $\lambda$ is a $\literal$ then both the the claims follow since $\rho(w) = \rho^w_\theta(w)$. The case of $\land$ and $\lor$ are standard.

For the case $\Box_x \lambda$, we only need to prove claim(2).  
Now suppose $M,w,[x\mapsto a^w_\Lambda,y\mapsto a^w_\Pi] \models \Box_x \lambda$. Pick arbitrary $e$ and $f$. We need to prove that $T^w_\theta,w,[x\mapsto (\Lambda,e,f), y\mapsto (\Pi,e,f)] \models \Box_x \lambda$. Pick any $(w,(\Lambda,e,f),u^{(\Lambda,e,f)}) \in R^w_\theta$, then by construction we have $(w,a^w_\Lambda,u)\in R$ and since $M,w,[x\mapsto a^w_\Lambda,y\mapsto a^w_\Pi] \models \Box_x \lambda$, we have $M,u,[x\mapsto a^w_\Lambda,y\mapsto a^w_\Pi] \models \lambda$. Let $ a^u_{\Pi'} \in A^u$ such that  1-$\type(u,a^u_{\Pi'}) =$1-$\type(u,a^w_\Pi)$ and since $a^w_\Lambda$ is the incoming edge of $u$,  by $\Pi_2$ component, we have 2-$\type(u,a^w_\Pi,a^w_\Lambda) = $2-$\type(u,a^u_{\Pi'},a^w_\Lambda)$ and also $a^w_\Lambda \in A^u$ . Hence $M,u, [x\mapsto a^w_\Lambda, y \mapsto a^u_{\Pi'}] \models \lambda$ and by induction  hypothesis $T^u_\theta,u,[x\mapsto ($1-$\type(u,a^w_\Lambda),e,f), y\mapsto ($1-$\type(u,a^u_{\Pi'}),e,f)] \models \lambda$. Now by construction, at $u^{(\Lambda,e,f)}$ we have $\Omega{(\Lambda,e,f)} = ($1-$\type(w,a^w_\Lambda),e,f)$ and $\Omega(\Pi,e,f) = ($1-$\type(u,a^u_{\Pi'}),e,f)$. Thus, by Lemma \ref{lemma: adding new agents is ok}, $T^w_\theta,u^{(\Lambda,e,f)},[x\mapsto (\Lambda,e,f), y\mapsto (\Pi,e,f)] \models \lambda$. Hence, we have $T^w_\theta,w,[x\mapsto (\Lambda,e,f), y\mapsto (\Pi,e,f)] \models \Box_x \lambda$.\ \
The case for $\Box_y \lambda$ is analogous.

\medskip
For the case $\Diamond_y \lambda$, again only claim(2) applies. Suppose $M,w,[x\mapsto a^w_\Lambda,y\mapsto a^w_\Pi] \models \Diamond_y \lambda$.  Now pick $e$ and $f$ appropriately. We need to prove that $T^w_\theta,w,[x\mapsto (\Gamma,e,f), y\mapsto (\Pi,e,f)] \models \Diamond_y \lambda$.
By supposition, there is some $w\xrightarrow{a^w_\Pi}u$ such that $M,u,[x\mapsto a^w_\Lambda,y\mapsto a^w_\Pi] \models \lambda$.
Using the argument similar  to the previous case, we can prove that $T^w_\theta,u^{(\Lambda,e,f)},[x\mapsto (\Lambda,e,f), y\mapsto (\Pi,e,f)] \models \lambda$ and hence $T^w_\theta,w,[x\mapsto (\Gamma,e,f), y\mapsto (\Pi,e,f)] \models \Diamond_y \lambda$. \ \
The case of $\Diamond_x \lambda$ is symmetric.

\bigskip

For the case $\exists y~\lambda$ (where $x$ is free at the outer most level), for claim (2) first note that since $\theta$ is in the normal form, $\lambda$ is $\Qsafe$. Also note that $\exists y~\lambda = \chi_i$ for some $\chi_i\in E_\theta$. Now, suppose $M,w,[x\mapsto a^w_\Lambda] \models \exists y~\lambda$ then we need to prove that $T^w_\theta,w,[x\mapsto (\Lambda,e,f)] \models \exists y~\lambda$. Let the $i^{th}$ witness of $a^w_\Lambda$ be $b_i$ and hence $M,w,[x\mapsto a^w_\Lambda, y\mapsto b_i] \models \lambda$. Let 1-$\type(w,b_i) = \Pi'$, we claim that $T^w_\theta,w,[x\mapsto(\Lambda,e,f), y\mapsto (\Pi',i,f')] \models \lambda$ where $f' = f+1 \mod 3$. Suppose not, then $\land$ and $\lor$ can be broken down and we get some $\module$ such that $M,w,[x\mapsto a^w_\Lambda, y\mapsto b_i] \models \Delta_z  \lambda'$ and $T^w_\theta,w,[x\mapsto(\Lambda,e,f), y\mapsto (\Pi',i,f')] \not\models \Delta_z \lambda'$ where $\Delta\in \{\Box,\Diamond\}$ and $z\in \{x,y\}$. Assume $\Delta = \Box$ and $z=x$ (other cases are analogous). This implies $T^w_\theta,w,[x\mapsto(\Lambda,e,f), y\mapsto (\Pi',i,f')] \models \Diamond_x \neg \lambda'$ and hence there is some $w\xrightarrow{(\Lambda,e,f)}u^{(\Lambda,e,f)}$ such that $T^w_\theta, u^{(\Lambda,e,f)}, [x\mapsto(\Lambda,e,f), y\mapsto (\Pi',i,f')] \models \neg \lambda'$(*).  By construction, there is a corresponding $w\xrightarrow{a^w_\Lambda}u$ in $M$. Now since $M,w,[x\mapsto a^w_\Lambda, y\mapsto b_i] \models  \Box_x  \lambda'$, we have $M,u,[x\mapsto a^w_\Lambda, y\mapsto b_i] \models  \lambda'$. Let $b_i' \in A^u$ such that 1-$\type(u,b_i) =$1-$\type(u,b_i')$. Since  $a^w_\Lambda$ is the incoming edge to $u$ by $\Pi'_2$ component, we have 2-$\type(u,b_i,a^w_\Lambda) =$2-$\type(u,b_i',a^w_\Lambda)$ and $a^w_\Lambda \in A^u$. Thus,  $M,u,[x\mapsto a^w_\Lambda, y\mapsto b'_i] \models  \lambda'$ and by induction hypothesis,  $T^u_\theta,u, [x\mapsto(\Lambda,e,f), y\mapsto (1$-$\type(u,b_i'),e,f)] \models  \lambda'$. Again by construction, at $u$ we have $\Omega((\Lambda,e,f)) = (\Lambda,e,f)$ and $\Omega((\Pi',i,f')) = (1$-$\type(u,b_i'),e,f)$ and hence  by Lemma \ref{lemma: adding new agents is ok}, $T^w_\theta, u^{(\Lambda,e,f)}, [x\mapsto(\Lambda,e,f), y\mapsto (\Pi',i,f')] \models  \lambda'$ which is a contradiction to (*).  \ \ \ 
The case of $\exists y~\lambda$ is analogous.

\medskip

For the case of $\forall x~\lambda$ (where $y$ is free at the outer most level), suppose $M,w,[y\mapsto a^w_\Pi] \models \forall x~\lambda$. We need to prove that $T^w_\theta,w,[y\mapsto (\Pi,e,f)]\models \forall x~\lambda$. Pick any $(\Lambda',e',f')\in \delta^w_\theta(w)$, now we claim $T^w_\theta, w, [x\mapsto (\Lambda',e',f'), y\mapsto (\Pi,e,f)] \models \lambda$ (otherwise, like in the previous case, since $\lambda$ is $\Qsafe$, we can reach a $\module$ where they differ and obtain a contradiction). \ \ \ 
The case $\forall y~\lambda$ is analogous.

\bigskip

Finally we come to sentences which are relevant for claim (1). Note that in the normal form, at the outermost level, a sentence will have only $\literals$ or formulas of the form $\forall x \exists y~\psi_l$ or $\forall x \forall y~\phi$. 

For the case $M,w \models \forall x \exists y~ \psi_l$, let $\forall x\exists y~\psi_l$ be $i^{th}$ formula in $E_\theta$. We need to prove $T^w_\theta,w \models \forall x \exists y~\psi_l$. Pick any $(\Lambda,e,f)\in \delta^w_\theta(w)$ and we have $a^w_\Lambda \in A^w$. Let the $i^{th}$ witness for $a^w_\Lambda$ be $b_i$. Thus we have $M,w,[x\mapsto a_\Gamma, y\mapsto b_i] \models \psi_l$.  Let $1$-$\type(w,b_i) = \Pi'$. Again we claim that $T^w_\theta,w,[x\mapsto (\Gamma,e,f), y\mapsto[\Pi',e,f')] \models \psi_l$ where $f' = f+1 \mod 3$. Suppose not, again $\land$ and $\lor$ can be broken down and we get some $\module$ such that $M,w,[x\mapsto a^w_\Lambda, y\mapsto b_i] \models \Delta_z  \lambda'$ and $T^w_\theta,w,[x\mapsto(\Lambda,e,f), y\mapsto (\Pi',i,f')] \not\models \Delta_z \lambda'$ where $\Delta\in \{\Box,\Diamond\}$ and $z\in \{x,y\}$. Assume $\Delta = \Diamond$ and $z=y$ (other cases are analogous). This implies $T^w_\theta,w,[x\mapsto(\Lambda,e,f), y\mapsto (\Pi',i,f')] \models \Box_y \neg \lambda'$ (*). 
Now let $a^w_{\Pi'} \in A^w$ such that 1-$\type(w,a^w_{\Pi'}) = 1$-$\type(w,b_i) = \Pi'$. Thus by $\Pi'_3$ component, there is some $d \in \delta^w_{\theta}$ such that 2-$\type(w,a^w_{\Pi'},d) = 2$-$\type(w,b_i,a^w_\Lambda)$ and hence $M,w,[x\mapsto d, y\mapsto a^w_{\Pi'}] \models \Diamond_y \lambda'$. Hence there is some $w\xrightarrow{a^w_{\Pi'}}u$ such that $M,u, [x\mapsto d,y\mapsto a^w_{\Pi'}] \models \lambda'$. Now let 1-$\type(u,d) = 1$-$\type(u,d')$ such that $d' \in A^u$ and since $a^w_{\Pi'}$ is the incoming edge, we have $M,u, [x\mapsto d',y\mapsto a^w_{\Pi'}] \models \lambda'$ and by induction hypothesis, $T^u_\theta,u, [x\mapsto (1$-$\type(u,d'),i,f'), y\mapsto (1$-$\type(u,a^w_{\Pi'}),i,f')] \models \lambda'$ and while constructing $u^{(\Pi',i,f')}$ (case 3 applies for $a^w_\Lambda$ since its $i^{th}$ witness has same 1-$\type$ as $a^w_{\Pi'}$) we have $\Omega((\Lambda,e,f'-1)) = (1$-$\type(u,d'),i,f')$. Thus by Lemma \ref{lemma: adding new agents is ok} (since $f'-1 = f$), $T^w_\theta,u^{(\Pi',i,f')},[x\mapsto(\Lambda,e,f), y\mapsto (\Pi',i,f')] \models \lambda'$ which contradicts (*).
\medskip

Finally, for the case $\forall x \forall y~\phi$ suppose $M,w \models \forall x \forall y~\phi$, then for any $(\Gamma,e,f),~ (\Delta,e',f') \in \delta^w_\theta(w)$ we claim that $T^w_\theta,w,[x\mapsto (\Gamma,e,f), y\mapsto (\Delta,e',f')] \models \phi$ (else again, go to the smallest module and prove contradiction).
\end{proof}

Note that in the type based model, at any world $w$ we have $|\delta^w_\theta|  = 2^{2^{O(|\SF(\theta)|)}}$. Now if we start with a $\PTML^2$ formula $\phi$, then though its corresponding equi-satisfiable formula $\theta$ is exponentially larger, the number of distinct subformulas in $\theta$ is still linear in the size of $\phi$.

\begin{corollary}
$\TML^2$ satisfiability is in 2-$\EXPSPACE$.
\end{corollary}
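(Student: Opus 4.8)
The plan is to chain together the three reductions established above --- $\TML^2$ to $\PTML^2$, then to $\FSNF~\DNF$, then the bounded agent property --- and to finish with a standard ``guess and verify'' argument.

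Let $\phi$ be a $\TML^2$ sentence of length $n$. By Lemma~\ref{lem: TML and PTML satisfiability are same}, $\phi$ is satisfiable iff the $\PTML^2$ sentence $q \land \TrTP(\phi;q)$ is, and the latter still uses only two variables, has length polynomial in $n$, and has modal depth $O(n)$; so it suffices to decide $\PTML^2$ satisfiability. Next, by Lemma~\ref{lem: FSNF equisat} we pass to an equi-satisfiable $\FSNF~\DNF$ sentence $\theta$. Its length may be as large as $2^{O(n^2)}$, but --- as noted after Lemma~\ref{lem: FSNF equisat}, and as is essential here --- the number of distinct subformulas (equivalently, the number of modules) occurring in $\theta$ remains polynomial in $n$; hence $|\SF(\theta)|$ is polynomial in $n$ and $\md(\theta) = O(n)$.

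Now invoke Theorem~\ref{thm: 2 var PTML has bounded agent property}: $\theta$ is satisfiable iff it is satisfiable in a tree model of height at most $\md(\theta) = O(n)$ in which every world carries at most $N := 2^{2^{O(|\SF(\theta)|)}}$ agents, which by the previous paragraph is $2^{2^{\mathrm{poly}(n)}}$. A routine pruning of witnesses --- at each world retain, for every agent $a$ and every module $\Diamond_a\beta$ of $\theta$, a single $a$-successor witnessing it and delete the rest --- lets us additionally assume that the out-degree of every world is at most $N\cdot|\SF(\theta)| = 2^{2^{\mathrm{poly}(n)}}$, so that the whole tree has at most $\big(2^{2^{\mathrm{poly}(n)}}\big)^{O(n)} = 2^{2^{\mathrm{poly}(n)}}$ worlds. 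The full description of such a model (its worlds, local agent sets, agent-labelled edges, and propositional valuations) then has size $2^{2^{\mathrm{poly}(n)}}$, and $|\theta| = 2^{O(n^2)}$ is dominated by this.

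The decision procedure nondeterministically guesses such a finite model $M$ together with a world $r$, and then deterministically checks that $M,r\models\theta$ by evaluating, bottom-up, every subformula of $\theta$ at every world of $M$ under every relevant agent assignment; this check uses space polynomial in $|M|+|\theta| = 2^{2^{\mathrm{poly}(n)}}$. Thus $\TML^2$ satisfiability is decided in nondeterministic space $2^{2^{\mathrm{poly}(n)}}$, and by Savitch's theorem in deterministic space $\big(2^{2^{\mathrm{poly}(n)}}\big)^2 = 2^{2^{\mathrm{poly}(n)}}$, i.e.\ in $2\text{-}\EXPSPACE$. The main point requiring care is the bookkeeping just mentioned: the $\FSNF$ transformation inflates the \emph{length} of the formula exponentially but keeps the number of its subformulas polynomial, which is precisely what keeps the agent bound from Theorem~\ref{thm: 2 var PTML has bounded agent property} \emph{doubly} exponential in $n$ --- were one to bound it in terms of $|\theta|$ instead, the estimate would collapse to a triply exponential space bound. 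Alternatively, one may avoid storing $M$ in full and run instead a Ladner-style depth-first search over the tree model, re-using space across sibling subtrees, obtaining the same overall bound.
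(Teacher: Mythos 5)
Your proof is correct and follows the paper's reduction chain (Lemma~\ref{lem: TML and PTML satisfiability are same}, then Lemma~\ref{lem: FSNF equisat}, then Theorem~\ref{thm: 2 var PTML has bounded agent property}), including the crucial bookkeeping point that the $\FSNF$ translation blows up the formula's \emph{length} exponentially while keeping $|\SF(\theta)|$ polynomial, which is exactly what keeps the agent bound doubly rather than triply exponential. Where you diverge is the final step. The paper first applies Lemma~\ref{lem: inc and const domain sat are same} to pass to a constant-domain model over the doubly exponential agent set $D$, then expands every $\forall$/$\exists$ into a conjunction/disjunction over $D$, obtaining a propositional multi-modal formula of doubly exponential size whose satisfiability is decided by the standard \PSPACE{} procedure; \PSPACE{} in a doubly exponential input gives 2-\EXPSPACE. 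You instead bound the branching of the tree model and run a guess-and-verify (or Ladner-style search) directly on increasing-agent models. Your route avoids the constant-domain detour but obliges you to justify the branching bound yourself, since Theorem~\ref{thm: 2 var PTML has bounded agent property} bounds only the local agent sets, not the out-degree (the type-based model inherits one successor per successor of the original model). Your pruning argument does work --- deleting successors preserves $\Box$-formulas and you retain witnesses for true $\Diamond$-instances --- but the count is slightly off: a module $\Diamond_x\beta$ may have $\beta$ containing both free variables, so for a fixed agent $a$ interpreting $x$ you may need a distinct $a$-successor for each choice of the agent interpreting $y$; the correct out-degree bound is $N^2\cdot|\SF(\theta)|$ rather than $N\cdot|\SF(\theta)|$. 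This is still $2^{2^{\mathrm{poly}(n)}}$, so the final bound is unaffected.
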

\begin{proof}
Any $\TML^2$ formula $\alpha$ is satisfiable iff (by Lemma.\ref{lem: TML and PTML satisfiability are same}) its corresponding $\PTML^2$ translation $\phi$ is satisfiable  iff (by Theorem \ref{thm: 2 var PTML has bounded agent property}) the corresponding normal form $\theta$  of $\phi$ is satisfiable over agent set $D$ of size $2^{2^{O(|\phi|)}}$ iff  (by Lemma. \ref{lem: inc and const domain sat are same}) $\hat{\theta} \in \PTML^2$ is satisfiable in a constant domain model over $D$.

Thus we can expand the quantifiers of $\hat{\theta}$ by corresponding $\bigwedge$ and $\bigvee$ for $\forall$ and $\exists$ respectively and we get a propositional multi-modal formula. This satisfiability is in $\PSPACE$. But in terms of the size of the formulas, $|\hat{\theta}| = 2^{2^{|\alpha|^2}}$. Thus we have a 2-$\EXPSPACE$ algorithm.
\end{proof}

\subsection{Example}

We illustrate the construction of {\em type based models} with an example.
Consider the $\PTML^2$ sentence $\theta := \forall x~\Box_x \Box_x \bot \land \forall x \exists y~(\Box_x (\Diamond_y (\neg p) \land \exists y~ \Diamond_y  p))$ which is in $\FSNF~\DNF$.
Let $M$ be the model described in Fig.~\ref{fig: example original model}  where 
\begin{itemize}
\item $W = \{r\} \cup \{ u^i,v^i,w^i \mid i \in \Nat\}$ 
\item $D = \Nat$
\item $\delta(r) = \{ 2i \mid i\in \Nat\}$ (all even numbers) and \\ $\delta(w^i) = \delta(u^i) = \delta(v^i) = \Nat$
\item $R = \{(r,2i,w^i), (w^i,2i+1,u^i), (w^i,2i+2,v^i) \mid i\in \Nat\}$
\item $\rho(r) = \rho(w^i) = \rho(v^i) = \emptyset$ and $\rho(u^i) = p$ for all $i\in \Nat$.
\end{itemize}

\begin{figure}[h]
\includegraphics[scale=0.07]{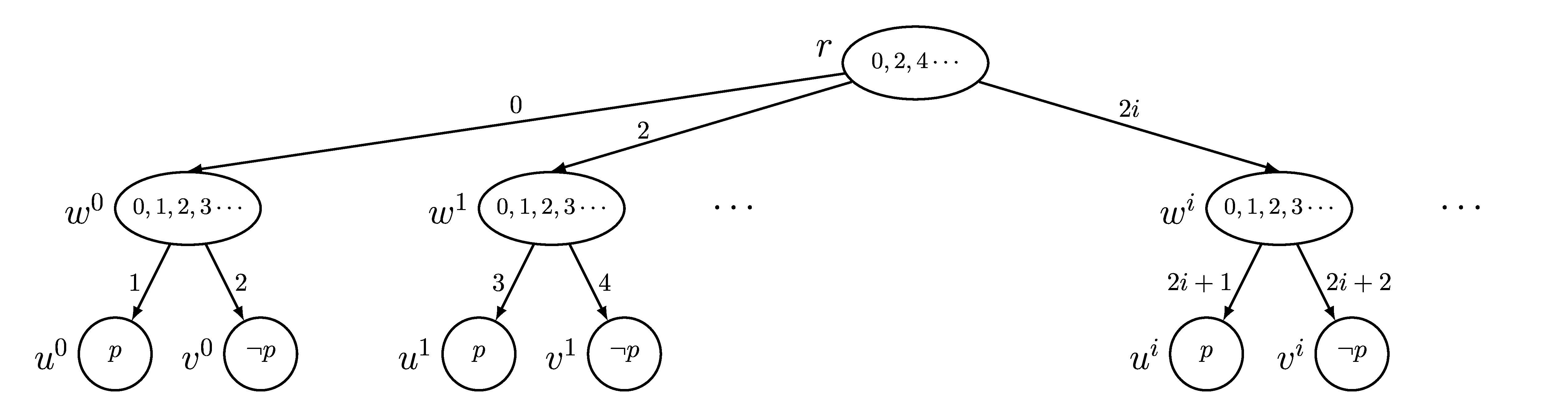}

	\caption{Given model such that $M,r \models \theta$.}
	\label{fig: example original model}
\end{figure}

\begin{figure}[h]
\includegraphics[scale=0.07]{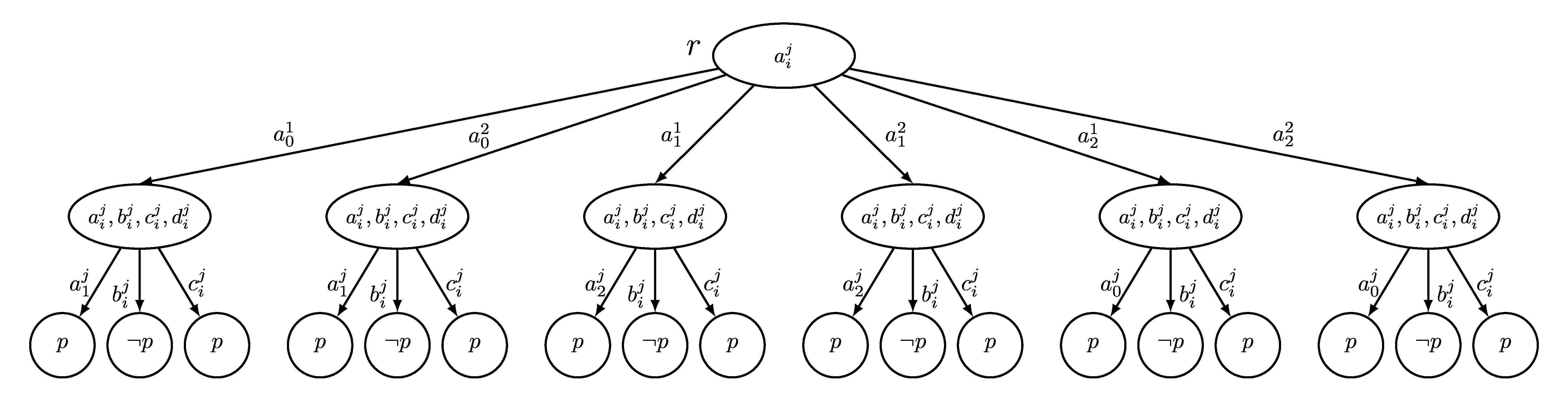}

	\caption{Corresponding bounded agent model with $M',r \models \theta$.  $a^j_i,b^j_i,c^j_i$ corresponds to agents with $1\le j\le 2$ and $i\in \{0,1,2\}$. The  edge $a^j_i,b^j_i, c^j_i$   indicate one successor for every $1\le j \le 2$ and $i\in \{0,1,2\}$.}
\label{fig: example bounded model}
\end{figure}

 Clearly,  $M,r \models \theta$. Let $f^r:D_r \mapsto D_r$ be defined by $f^r(2i) = 2i+2$ and at all $w^i$, $g^i(j) = 2i+1$ for all $i\in \Nat$ be the two (relevant) witness functions. The one and two types at every world are described as follows:

At leaf nodes $u^i$ and $v^i$ there is only one distinct one type and two types.
At $w^i$, note that $r\xrightarrow{2i}w_i$ is the incoming edge and only $2i+1$ and $2i+2$ have outgoing edges. Thus, there are $3$ distinct 1-$\type$ members at $w^i$, each for $(2i+1), (2i+2)$ and [the rest]. Let $b,c,d$ be the respective types. Finally at the root again we have only a single distinct type (call it $a$).  

\medskip
Since there are $2$ existential formulas, the root of the type based model has $(1\times 2\times 3) = 6$ agents let it be $\{a^e_f\mid 1\le e \le 2,~ 0\le f \le 2\}$ and $0$ be the representative.
At $w^0$ we have $(3\times 2 \times 3) = 18$ agents. Let the representatives be $1,2,0$ for $b,c,d$ respectively. Note that we cannot pick any other representative for [the rest] other than $0$ since $0$ is the incoming edge to $w^0$. Let the bounded agent set be $\{ b^e_f,~c^e_f,~d^e_f\mid 1\le e \le 2,~ 0\le f\le 2\}$. The corresponding bounded model $M'$ is described in Figure \ref{fig: example bounded model}. It can be verified that $M',r \models \theta$.

\section{Discussion}
We have proved that the two variable fragment of $\PTML^2$ (and hence $\TML^2$) is 
decidable. The upper bound shown is in 2-$\EXPSPACE$.  A $\NEXP$ lower bound 
follows since $\FO^2$ satisfiability can be reduced to $\PTML^2$ satisfiability. 
We believe that by careful management of the normal form, space can be reused
and the upper bound can in fact be brought down by one exponent. That would still leave a significant
gap between lower and upper bounds to be addressed in future work.

We can also prove that addition of constants makes $\PTML^2$ undecidable. In fact, with
the addition of a single constant $\mathbf{c}$ we can use $\Box_\mathbf{c}$ to simulate the `free' 
$\Box$ of $\FOML^2$, thus yielding undecidability.  When it comes to equality, the 
situation is more tricky: note that we can no longer use model extension 
(Def.\ref{def: model extension} and Lemma \ref{lemma: adding new agents is ok}) since equality might restrict the number of 
agents at every world.

The most important issue is expressiveness. What kind of accessibility relations
or model classes can be characterized by 2-variable $\TML$? This is unclear,
but there are sufficiently intriguing examples and applications making the
issue an interesting challenge.

\bibliography{ref}

\end{document}